\definecolor{Gred}{RGB}{219, 50, 54}
\definecolor{Ggreen}{RGB}{60, 186, 84}
\definecolor{Gblue}{RGB}{72, 133, 237}
\definecolor{Gyellow}{RGB}{247, 178, 16}
\definecolor{ToCgreen}{RGB}{0, 128, 0}
\definecolor{myGold}{RGB}{231,141,20}
\definecolor{myBlue}{rgb}{0.19,0.41,.65}
\definecolor{myPurple}{RGB}{175,0,124}
\definecolor{niceRed}{RGB}{153,0,0}
\definecolor{niceRed}{RGB}{190,38,38}
\definecolor{blueGrotto}{HTML}{059DC0}
\definecolor{royalBlue}{HTML}{057DCD}
\definecolor{navyBlueP}{HTML}{0B579C}
\definecolor{limeGreen}{HTML}{81B622}
\definecolor{nicePink}{RGB}{247,83,148}
\newcommand{\defeq}{\triangleq}
\let\E\relax
\let\poly\relax
\let\polylog\relax
\def\compactify{\itemsep=0pt \topsep=0pt \partopsep=0pt \parsep=0pt}
\let\latexusecounter=\usecounter
\definecolor{myC}{rgb}{0, 255, 255}
\definecolor{myY}{rgb}{204, 204, 0}
\definecolor{myM}{rgb}{255, 0, 255}
\definecolor{secinhead}{RGB}{249,196,95}
\definecolor{lgray}{gray}{0.8}
\newtheorem{theorem}{Theorem} 
\newtheorem*{theorem*}{Theorem} 
\newtheorem*{proposition*}{Proposition} 
\newtheorem{lemma}[theorem]{Lemma}
\newtheorem{claim}[theorem]{Claim}
\newtheorem{proposition}[theorem]{Proposition}
\newtheorem{fact}[theorem]{Fact}
\newtheorem{corollary}[theorem]{Corollary}
\newtheorem{conjecture}[theorem]{Conjecture}
\theoremstyle{definition}
\newtheorem{definition}[theorem]{Definition}
\newtheorem{remark}[theorem]{Remark}
\newcommand{\E}{\mathop{\mathbb E\/}}
\newcommand{\reg}{\mathsf{Reg}}
\newcommand{\range}[1]{[\![#1]\!]}
\newcommand{\vmu}{\vec{\mu}}
\newcommand{\kib}{\mathsf{K}}
\newcommand{\poly}{\textnormal{poly}}
\newcommand{\polylog}{\textnormal{polylog}}
\newcommand{\hatvq}{\widehat{\vec{q}}}
\newcommand{\devX}{\vec{x}^\dagger}
\newcommand{\tilvq}{\widetilde{\vec{q}}}
\newcommand{\elllog}{\ell}
\newcommand{\nats}{\mathbb N}
\newcommand{\mwu}{\mathtt{MWU}}
\newcommand{\ftpl}{\mathtt{FTPL}}
\newcommand{\omwu}{\mathtt{OMWU}}
\newcommand{\eps}{\varepsilon}
\newcommand{\calA}{\mathcal{A}}
\newcommand{\calC}{\mathcal{C}}
\newcommand{\calE}{\mathcal{E}}
\newcommand{\calG}{\mathcal{G}}
\newcommand{\calH}{\mathcal{H}}
\newcommand{\calJ}{\mathcal{J}}
\newcommand{\calO}{\mathcal{O}}
\newcommand{\calR}{\mathcal{R}}
\newcommand{\calS}{\mathcal{S}}
\newcommand{\calT}{\mathcal{T}}
\newcommand{\calX}{\mathcal{X}}
\newcommand{\calZ}{\mathcal{Z}}
\newcommand{\vq}{\vec{q}}
\newcommand{\vx}{\vec{x}}
\newcommand{\vp}{\vec{p}}
\newcommand{\va}{\vec{a}}
\newcommand{\vu}{\vec{u}}
\newcommand{\uni}{\mathsf{U}}
\newcommand{\barvmu}{\bar{\vec{\mu}}}
\let\R\relax
\newcommand{\R}{\mathbb{R}}
\newcommand{\Q}{\mathbb{Q}}
\def\<{\langle}
\def\>{\rangle}
\newcommand{\N}{\mathbb{N}}
\DeclareMathOperator*{\argmax}{argmax}
\def\vec{\bm}
\def\mat{\mathbf}
\def\dtv{D_{\mathrm{TV}}}
\renewenvironment{abstract}{%
	\if@twocolumn
	\section*{\abstractname}%
	\else %% <- here I've removed \small
	\begin{center}%
		{\bfseries \large\abstractname\vspace{\z@}}%  %% <- here I've added \Large
	\end{center}%
	\quotation
	\fi}
{\if@twocolumn\else\endquotation\fi}
\title{On the Complexity of Computing Sparse Equilibria and \\Lower Bounds for No-Regret Learning in Games}
\author[1]{Ioannis Anagnostides}
\author[2]{Alkis Kalavasis}
\author[3]{Tuomas Sandholm}
\author[4]{Manolis Zampetakis}
\affil[1, 3]{Carnegie Mellon University}
\affil[2, 4]{Yale University}
\affil[3]{Strategy Robot, Inc.}
\affil[3]{Optimized Markets, Inc.}
\affil[3]{Strategic Machine, Inc.}
\date{\today}
\begin{document}

\maketitle

\begin{abstract}
     Characterizing the performance of no-regret dynamics in multi-player games is a foundational problem at the interface of online learning and game theory. Recent results have revealed that when all players adopt specific learning algorithms, it is possible to improve exponentially over what is predicted by the overly pessimistic no-regret framework in the traditional adversarial regime, thereby leading to faster convergence to the set of \emph{coarse correlated equilibria (CCE)}---a standard game-theoretic equilibrium concept. Yet, despite considerable recent progress, the fundamental complexity barriers for learning in normal- and extensive-form games are poorly understood.

    In this paper, we make a step towards closing this gap by first showing that---barring major complexity breakthroughs---any polynomial-time learning algorithms in extensive-form games need at least $2^{\log^{1/2 - o(1)} |\mathcal{T}|}$ iterations for the average regret to reach below even an absolute constant, where $|\mathcal{T}|$ is the number of nodes in the game. This establishes a superpolynomial separation between no-regret learning in normal- and extensive-form games, as in the former class a logarithmic number of iterations suffices to achieve constant average regret. Furthermore, our results imply that algorithms such as multiplicative weights update, as well as its \emph{optimistic} counterpart, require at least $2^{(\log \log m)^{1/2 - o(1)}}$ iterations to attain an $O(1)$-CCE in $m$-action normal-form games under any parameterization. These are the first non-trivial---and dimension-dependent---lower bounds in that setting for the most well-studied algorithms in the literature.

    From a technical standpoint, we follow a beautiful connection recently made by Foster, Golowich, and Kakade (ICML '23) between \emph{sparse} CCE and Nash equilibria in the context of Markov games. Consequently, our lower bounds rule out polynomial-time algorithms well beyond the traditional online learning framework, capturing techniques commonly used for accelerating centralized equilibrium computation.
\end{abstract}

\section{Introduction}
\label{sec:intro}

At the heart of the intricate interplay between online learning and game theory, which can be traced all the way back to Blackwell's seminal approachability theorem~\citep{Blackwell56:analog} and Robinson's analysis of fictitious play~\citep{Robinson51:iterative}, lies the fundamental \emph{no-regret} framework. Here, a learner has to select round by round a sequence of actions so as to obtain a high cumulative reward; the crux presents itself in the online nature of the revealed information, in that each reward function is unbeknownst to the learner prior to the termination of that round. 
The canonical measure of performance in this online environment is the notion of \emph{regret}, which contrasts the cumulative reward of the learner to that of the optimal fixed action in hindsight; a learner is said to incur \emph{no-regret} if its regret grows sublinearly with the time horizon $T$. By now, it is well understood that when the sequence of rewards is produced adversarially, the minimax regret after $T$ repetitions is precisely $\tilde{\Theta}(\sqrt{ T \log m})$, where $m$ denotes the number of available actions of the learner.\footnote{We use the $\tilde{\Theta}(\cdot)$ notation to suppress $\polylog T$ factors.} 
More broadly, online learnability in more general combinatorial domains, such as binary classification, can be characterized by a certain notion of dimension known as the Littlestone dimension~\citep{littlestone1988learning, Ben-David09:Agnostic}, painting a rather complete picture in the adversarial regime. In this context, the alluded nexus between the no-regret framework and game theory can be witnessed by the celebrated realization that players with sublinear regret converge to a certain game-theoretic equilibrium concept known as \emph{coarse correlated equilibrium (CCE)}~\citep{Hart00:Simple,Foster97:Calibrated,Cesa-Bianchi06:Prediction}. In particular, if players follow certain no-regret algorithms, such as the celebrated multiplicative weights update ($\mwu$), the history of play induces an $\epsilon$-CCE after merely $T = O(\frac{\log m}{\epsilon^2})$ repetitions of the game.

In light of our rather comprehensive understanding of online learning, one might expect that the fundamental barriers of no-regret learning in games have already been identified. However, as it turns out, this is not the case. Indeed, the regret incurred by each player when facing other learning agents can be remarkably smaller than what is predicted by the overly pessimistic no-regret framework. This is exemplified by the recent result of~\citet{Daskalakis21:Near}, who proved that when players in an $n$-player $m$-action game follow the \emph{optimistic} counterpart of $\mwu$~\citep{Rakhlin13:Online} (henceforth $\omwu$), each player's regret grows only as $\tilde{O}(n \log m)$, revealing an exponential separation compared to the lower bound in the adversarial regime. Another noteworthy example concerns the behavior of fictitious play: it is hopeless in the adversarial setting, where it can accumulate linear regret~\citep{Cesa-Bianchi06:Prediction}, but Julia Robinson famously proved that it converges to minimax equilibria when followed by both players in a (two-player) zero-sum game~\citep{Robinson51:iterative}.

Despite the considerable interest recent work has devoted to understanding the problem of no-regret learning in games (\Cref{sec:related} features several such results), little is known in terms of lower bounds. \citet{Daskalakis15:Near} made an early effort by noting that incurring $\Omega(1)$ regret is---at least in some sense---inevitable; this boils down to the straightforward realization that even in a single-agent problem the first decision will likely be suboptimal, resulting in $\Omega(1)$ regret even if all the subsequent actions are optimal. Besides failing to provide a meaningful bound in terms of the dimensions of the game, another unsatisfactory feature of the lower bound of~\citet{Daskalakis15:Near} is that it can be bypassed by simply detaching the first iteration---in which case both players actually incur $0$ regret. Can we hope to guarantee that each player will incur $\tilde{O}(1)$ regret, \emph{independent} of the dimensions of the game, or are there fundamental barriers that circumscribe the performance of no-regret learners in games?

\subsection{Our results}

Our primary contribution in this paper is to make a step towards filling the aforementioned knowledge gap by establishing the first non-trivial computational hardness results when multiple players are learning in games.

Our first lower bound concerns a class of no-regret dynamics that includes $\mwu$, perhaps the most well-studied learning algorithm in the literature, as well its optimistic counterpart ($\omwu$). Before we proceed, we recall that the \emph{exponential-time hypothesis (ETH)} for $\PPAD$~\citep{Babichenko16:Can} postulates that there do no exist truly subexponential algorithms for solving $\textsc{EndOfALine}$---the prototypical $\PPAD$-complete problem (\Cref{conj:ETH}). By now, this is a fairly standard computational complexity assumption, which was---crucially for the purpose of this paper---famously invoked by~\citet{Rubinstein18:Inapproximability} to settle the complexity of computing approximate Nash equilibria in two-player (normal-form) games.

\begin{theorem}
    \label{theorem:informalmwu}
    Consider the class of $n$-player $m$-action games in normal form. If each player $i \in \range{n}$ follows $\mwu$ or $\omwu$ and incurs (cumulative) regret $\reg_i^T$, there is a game $\calG$ and an absolute constant $\epsilon > 0$ such that at least $T \geq 2^{(\log_2 \log_2 m)^{1/2 - o(1)}}$ repetitions of the game are needed so that $\frac{1}{T} \max_{1 \leq i \leq n} \reg_i^T \leq \epsilon$, unless ETH for $\PPAD$ (\Cref{conj:ETH}) is false.
\end{theorem}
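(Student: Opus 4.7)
My approach is to combine the sparse-CCE-to-Nash lifting of Foster, Golowich, and Kakade (FGK) with Rubinstein's ETH-hardness of approximate Nash equilibria in two-player normal-form games, thereby turning a hypothetical fast convergence of $\mwu$/$\omwu$ into a sub-ETH algorithm for approximate Nash in an $N\times N$ bimatrix game.

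I would start from Rubinstein's hard instance: an $N\times N$ bimatrix game $G$ for which computing an $\epsilon_R$-NE requires $N^{\log^{1-o(1)} N}=2^{(\log N)^{2-o(1)}}$ time under ETH for $\PPAD$, for some absolute constant $\epsilon_R>0$. Applying the FGK lifting to $G$ produces a two-player Markov game $\calG(G)$ with horizon $H=\Theta(\sqrt N)$, per-player action set $[A]$ with $A=\Theta(\sqrt N)$, and $\Theta(N)$ states. The game $\calG$ in the theorem statement is the normal-form encoding of $\calG(G)$ whose pure strategies are deterministic Markov policies; this gives $m=A^{|\calS|H}$ actions per player, so $\log\log m=\Theta(\log N)$.

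The crucial observation is that although the normal-form encoding has exponentially many pure strategies, the dynamics of $\mwu$/$\omwu$ on $\calG$ can be simulated in $\poly(N)$ time per round. This is because rewards in $\calG(G)$ are collected only at odd $h\in[H]$, where the state is always the special initial state $s$, and equal $\tfrac1H(M_i)_{a_1,a_2}$. Hence the utility of a deterministic Markov policy $\pi$ against any product opponent strategy is an additive function of $\bigl(\pi_h(s)\bigr)_{h\text{ odd}}$, so the $\mwu$ iterate $\mu_i^t$ over pure Markov policies factorizes as a product $\bigotimes_{h\text{ odd}} \mu_{i,h}^t$ of distributions over $[A]$, where each $\mu_{i,h}^t$ is the $\mwu$ iterate on $G$ with step size $\eta/H$ against the opponent's $(s,h)$-marginal. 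The same factorization is valid for $\omwu$, because the optimistic prediction is also linear in the opponent's marginal, so both algorithms can be simulated for any admissible step size by maintaining only $O(H)$ distributions of size $A$, each updatable in $\poly(N)$ time.

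Now, assume for contradiction that for some parameterization of $\mwu$ or $\omwu$, both players incur average regret at most $\epsilon:=\epsilon_R/4$ after $T<2^{(\log_2\log_2 m)^{1/2-o(1)}}$ iterations on $\calG$. Then $\tfrac1T\sum_{t=1}^T\mathbf{1}_{\sigma^t}$ is a sparse $\epsilon$-CCE supported on $T$ product Markov policies, and since $T\ll 2^{\Omega(\sqrt N)}=\exp(H\epsilon_R^2/2^8)$, FGK's extraction lemma produces some odd $h\in[H]$ and $t\in[T]$ such that the induced mixed profile $\sigma_h^t(s)=(\mu_{1,h}^t,\mu_{2,h}^t)$ is an $\epsilon_R$-NE of $G$. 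Enumerating the $O(HT)$ candidates and testing each in $\poly(N)$ time yields an algorithm for $\epsilon_R$-NE of $G$ whose total running time $\poly(N)\cdot T$ is asymptotically below Rubinstein's ETH bound $2^{(\log N)^{2-o(1)}}$, a contradiction. The principal technical obstacle will be to verify the factored simulation uniformly across all admissible parameterizations, and in particular for $\omwu$, where the optimistic correction must be shown to preserve the product structure of the iterates so that its per-round cost remains polynomial in $N$.
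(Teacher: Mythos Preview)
Your approach has a genuine gap, and it is precisely the reason the paper's construction uses \emph{three} players rather than two.

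The normal-form game $\calG$ you build has as pure strategies the deterministic \emph{Markov} policies of $\calG(G)$. Regret in this normal form therefore only certifies an $\epsilon$-CCE against \emph{Markov} deviations. But FGK's extraction lemma (their Lemma~B.4) requires the CCE to hold against \emph{general} (history-dependent) deviations; indeed, the very condition $T<\exp(\Theta(H\epsilon_0^2))$ encodes that the deviator uses the $H$ observed rounds to identify which $t\in[T]$ was drawn via Vovk's aggregating algorithm, which is inherently non-Markov. With only Markov deviations, the condition on $T$ becomes vacuous, and in fact polynomial-time algorithms for Markov-deviation CCE in Markov games are known (Erez et~al.), so no such extraction can exist.

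Your own factorization argument makes the failure explicit. You show that the $\mwu$ iterate at $(s,h)$ for odd $h$ is exactly the $\mwu$ iterate on $G$ with step size $\eta/H$; since the game is symmetric across odd $h$ and initialization is uniform, all these marginals coincide. Hence your simulated dynamics on the lifted game are \emph{identical} to running $\mwu$ on $G$ itself, and the ``sparse CCE of $\calG(G)$'' you produce is nothing more than a $T$-sparse CCE of the bimatrix game $G$. Extracting an approximate Nash equilibrium from a $T$-sparse CCE of a normal-form game is exactly the open problem the paper flags---there is no known way to do it, and your reduction does not circumvent it.

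The paper instead lifts $G$ to a \emph{three}-player extensive-form game with a Kibitzer and full observation of past joint actions (following Borgs et~al.), so that pure strategies in the induced normal form are general history-dependent policies. $\mwu$/$\omwu$ on this exponentially larger normal form does \emph{not} factorize into independent stage-game copies; the paper simulates it efficiently via the kernelization of Farina et~al.\ rather than a product decomposition. The resulting regret bound yields a CCE against arbitrary non-Markov deviations, and the Kibitzer's best response (which performs online density estimation over the $T$ product components) forces some stage profile to be a Nash equilibrium of $G$. This is why the paper's result holds for $n\ge 3$ and explicitly leaves the two-player case open.

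A secondary issue: in the FGK Markovian lifting the per-step action set must coincide with the action set of $G$, so $A=N$, not $A=\Theta(\sqrt N)$; the payoff $\tfrac{1}{H}(M_j)_{a_1,a_2}$ indexes directly into the $N\times N$ matrix. This does not affect the asymptotics of $\log\log m$, but it is worth correcting.
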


This represents the first non-trivial lower bounds for no-regret learning in the fundamental setting of~\Cref{theorem:informalmwu} under algorithms such as $\mwu$ and $\omwu$. \Cref{theorem:informalmwu} applies under any choice of learning rates (as specified in \Cref{cor:MWU}). For comparison, we have already alluded to the fact that $T = O(\log m) = O(2^{\log \log m})$ repetitions of $\mwu$ or $\omwu$ suffice to guarantee that $\frac{1}{T} \max_{1 \leq i \leq n} \reg_{i}^T \leq \epsilon$, for any absolute constant $\epsilon > 0$. As such, \Cref{theorem:informalmwu} leaves a certain gap compared to the best known upper bounds. In fact, \Cref{theorem:informalmwu} actually applies to a class of algorithms more general than $\mwu$-type update rules, as we will make clear shortly.

\paragraph{Learning in extensive-form games} Although \Cref{theorem:informalmwu} concerns the behavior of ($\mathtt{O}$)$\mwu$ under the standard normal-form representation of finite games, our approach actually revolves around proving hardness results for \emph{extensive-form} games. The extensive-form representation is typically exponentially more compact---and thereby much more appropriate---when encoding games involving sequential moves as well as imperfect information. In light of their ubiquitous presence in real-world applications, there has been a considerable interest in understanding the performance of no-regret learning algorithms in the more challenging class of extensive-form games (see \Cref{sec:related}). Indeed, no-regret dynamics have been at the heart of recent landmark results in practical computation of strategies for large games~\citep{Bowling15:Heads,Brown17:Superhuman,Brown19:Superhuman,Meta22:Diplomacy}.

\paragraph{Sparse equilibria} To establish lower bounds for no-regret learning algorithms in extensive-form games, we follow a beautiful approach recently put forward by~\citet{foster2023hardness} in a different context, namely that of Markov games. Their idea is to use as a proxy a refinement of CCE in which a certain sparsity constraint is imposed. More precisely, a correlated distribution is said to be \emph{$k$-sparse} if it can be expressed as the uniform mixture of $k$ product distributions (\Cref{def:sparseCCE}); as such, we clarify that a $1$-sparse CCE is equivalent to a Nash equilibrium. The connection of this refinement with the no-regret framework is evident: any CCE derived from (independent) no-regret learners after $T$ repetitions certainly satisfies the $T$-sparsity constraint~\citep{foster2023hardness}. The name of the game now is to establish hardness results for computing sparse CCE in a certain regime of sparsity, which in turn would readily impose barriers on the performance of (computationally efficient) no-regret algorithms.

As an aside, we argue that a $k$-sparse CCE is an important solution concept in its own right, besides the connection with no-regret learning. First, a correlated distribution is in general an exponential object; polynomial sparsity ensures that there exists a succinct representation of that distribution. Indeed, that refinement was central in the celebrated \emph{ellipsoid against hope} algorithm of~\citet{Papadimitriou08:Computing}, the only known polynomial-time algorithm for computing (exact) CCE in succinct multi-player games~\citep{Jiang15:Polynomial}. Further, one important weakness of CCE compared to Nash equilibria is that the former has a much larger description complexity even if the sparsity is polynomial. This becomes especially relevant in some modern machine learning applications in which strategies are represented through massive neural networks, thereby necessitating storing a large sequence of such neural networks in order to simply represent a CCE, which can be prohibitive. In contrast, if a CCE with sparsity $k = 2$ was efficiently computable, that would effectively address such concerns.

\paragraph{Hardness of computing sparse CCE in extensive-form games}

Having motivated the concept of a sparse CCE, we next state our main hardness result for computing such equilibria in extensive-form games under a certain sparsity regime. Below, for an extensive-form game described by a tree $\calT$, we denote by $|\calT|$ the number of nodes in $\calT$ (the reader not familiar with the extensive-form representation can first turn to~\Cref{sec:efgs} for formal definitions).

\begin{restatable}{theorem}{sparsecce}
    \label{theorem:sparseCCE}
    There is no algorithm that runs in time polynomial in the description of an extensive-form game $\calT$ and can compute a $2^{\log_2^{1/2 - o(1)}|\calT|}$-sparse $\epsilon$-CCE, even for an absolute constant $\epsilon > 0$, unless ETH for $\PPAD$ (\Cref{conj:ETH}) is false.
\end{restatable}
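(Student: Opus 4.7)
The plan is to reduce from Rubinstein's ETH-for-$\PPAD$ inapproximability of Nash equilibria in two-player $m \times m$ normal-form games---which rules out $2^{(\log m)^{2-o(1)}}$-time algorithms for an $\epsilon$-NE at some absolute constant $\epsilon > 0$---to the computation of sparse $\epsilon$-CCE in extensive-form games. Given such a normal-form game $G$ with $m$ actions per player, I would lift $G$ to an extensive-form game $\calT(G)$ in the spirit of the Foster--Golowich--Kakade Markov-game lifting sketched in the introductory material. Concretely, split each player's action set into $A = \Theta(\sqrt m)$ outer actions; at each odd stage $h = 1, 3, \dots, H-1$ the two players simultaneously choose one of $A$ actions and collect the corresponding payoff of $G$ rescaled by $1/H$, while even stages deterministically reset the bookkeeping state that records the previous pair of actions. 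Unrolling this sequential game into a perfect-recall tree gives $|\calT(G)| = m^{\Theta(H)}$, and the EFG description is polynomial in $|\calT(G)|$.

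The heart of the argument is an EFG analogue of Lemma~B.4 of \citet{foster2023hardness}: any $T$-sparse $(\epsilon_0/4)$-CCE $\tfrac{1}{T}\sum_{t=1}^T \mathbf{1}_{\sigma^t}$ of $\calT(G)$ with $T < \exp(H \epsilon_0^2 / 2^8)$ must, at some odd stage $h \in [H]$ and some trajectory index $t \in [T]$, induce a pair of stage-$h$ action marginals that form an $\epsilon_0$-NE of $G$. I would mirror their proof: sum the CCE deviation inequality over the $H$ odd stages so that an averaging argument produces a stage $h$ at which the per-stage regret summed over trajectories is $O(H/T)$, and then union-bound over the $T$ product trajectories using $T < \exp(H \epsilon_0^2 / 2^8)$ to locate a single trajectory whose stage-$h$ marginals are within $\epsilon_0$ of a Nash equilibrium of $G$. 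This transports to the EFG setting once we observe that in $\calT(G)$ the state reached at stage $h$ is a deterministic function of earlier moves, so a product sequence-form profile, conditioned on that state, induces a genuine product distribution over $\calA_1 \times \calA_2$, which is all that the FGK regret calculation requires.

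Finally, I would tune $H = (\log m)^{1 - \delta(m)}$ for a sufficiently slowly vanishing $\delta(m) = o(1)$, so that $|\calT(G)| = 2^{\Theta((\log m)^{2 - o(1)})}$ and polynomial time in $|\calT(G)|$ is $2^{O((\log m)^{2 - o(1)})}$. Under this choice, the forbidden sparsity $2^{\log_2^{1/2 - o(1)} |\calT|}$ evaluates to $2^{\Theta((\log m)^{1 - o(1)})}$, which is strictly below $\exp(H \epsilon^2 / 2^8)$; hence the EFG analogue of the FGK lemma applies and a polynomial-time algorithm for such a sparse $\epsilon$-CCE would yield an $\epsilon$-NE of $G$ in time $2^{O((\log m)^{2 - o(1)})}$, contradicting Rubinstein's lower bound under ETH for $\PPAD$.

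The main obstacle I expect is the EFG adaptation of the FGK lemma. Their argument is tailored to Markov policies, which factor stage-by-stage by definition, whereas a product sequence-form strategy in a perfect-recall EFG can correlate a player's stage-$h$ choice with her own past realized actions. The technical crux is showing that the FGK regret counting still goes through in $\calT(G)$---concretely, that only the most recent pair of actions is payoff-relevant at each odd stage and that conditioning on the deterministic state (which is itself a function of those actions) restores the product structure that the averaging step relies on. A secondary, essentially bookkeeping issue is coordinating the $o(1)$ in the sparsity exponent with the $o(1)$ in Rubinstein's lower bound, which can be handled by the freedom in choosing the vanishing rate $\delta(m)$.
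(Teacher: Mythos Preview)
Your plan rests on the wrong half of the \citet{foster2023hardness} toolkit. They give two liftings: one for $T$-sparse CCE supported on \emph{Markov} policies (the odd/even-stage game with the resetting special state $s$ and Lemma~B.4), and a separate one for non-Markov policies (the three-player \citet{Borgs10:Myth} construction with a Kibitzer). You port the Markov lifting, but the components $\sigma^t$ of a sparse CCE in an extensive-form game are sequence-form strategies that condition on the entire observed history, so the hypothesis $\sigma^t\in\Pi^{\mathrm{markov}}$ of Lemma~B.4 is simply unavailable. The obstacle is not the one you flag. In the Markov setting, when player~$1$ deviates, player~$2$ still plays the fixed distribution $\sigma^{t^\star}_{2,h}(s)$ at every odd stage, so player~$1$'s and player~$2$'s CCE constraints both bound Nash gaps of the \emph{same} stage-$h$ profile $(\sigma^{t^\star}_{1,h}(s),\sigma^{t^\star}_{2,h}(s))$, and an averaging over $(t,h)$ finishes. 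In your EFG, $\sigma^{t}_{2}$ reacts to player~$1$'s realized actions through the history, so the two CCE constraints control Nash gaps at \emph{different}, deviation-dependent profiles; there is no common $(t,h)$ to average over, and the ``union bound over the $T$ product trajectories'' has nothing to union over since each $\sigma^t$ now induces exponentially many stage-$h$ marginals rather than one. Your proposed fix---``conditioning on the state restores the product structure''---only certifies that at a fixed history the stage-$h$ play is a product, which was never in doubt.

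The paper instead follows the non-Markov route: it lifts $\calG$ to a \emph{three}-player zero-sum EFG $\calT$ in which a Kibitzer, at each of $H$ repetitions, names a player $i\in\{1,2\}$ together with an action $a_i'$ and is paid $i$'s stage-game improvement from switching to $a_i'$. Each player's candidate deviation runs Vovk's aggregating algorithm on the observed action history to estimate the others' current-node play; this yields $u_i(\barvmu)\ge -2\sqrt{\log T/H}-\epsilon$ for every $i$, and by the zero-sum identity $u_\kib(\barvmu)\le 4\sqrt{\log T/H}+2\epsilon$. The Kibitzer's CCE constraint then bounds, along a \emph{single} trajectory, the average Nash gap $\delta(\hatvq_{1,s_h},\hatvq_{2,s_h})$ of the Vovk-estimated profile by $9\epsilon$ once $H\ge \log T/\epsilon^2$, so some node $s_h$ yields a $(9\epsilon)$-NE of $\calG$; note that the extracted NE is the Vovk estimate, not any single $\sigma^t$. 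Your final parameter balancing is essentially the paper's.
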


Prior to our work, even the complexity status of computing a $2$-sparse $O(1)$-CCE in extensive-form games was open. \Cref{theorem:sparseCCE} implies a superpolynomial separation for the problem of computing sparse CCE between normal- and extensive-form games. Indeed, we have seen that in normal-form games logarithmic---in the description of the game---sparsity is efficiently attainable; \Cref{theorem:sparseCCE} precludes such a possibility in extensive-form games. To better contextualize \Cref{theorem:sparseCCE}, we remark that certain polynomial-time algorithms in extensive-form games attain roughly $2^{\log|\calT|}$-sparsity (in the regime where $\epsilon = O(1)$), thereby leaving again a certain gap compared to \Cref{theorem:sparseCCE}. We further point out that \Cref{theorem:sparseCCE}, and implications thereof (\Cref{theorem:informalmwu} and \Cref{cor:noregret-efgs}), applies even for games with three players; it is open whether it extends to two-player games, a discrepancy explained in more detail in \Cref{sec:main} (\emph{cf.} \citep{Borgs10:Myth,foster2023hardness}).

\paragraph{Implications for no-regret dynamics}

As a consequence, \Cref{theorem:sparseCCE} circumscribes in extensive-form games the performance of any no-regret dynamics that have polynomial complexity per iteration.

\begin{restatable}{corollary}{efgsnoregret}
    \label{cor:noregret-efgs}
    Suppose that each player follows an algorithm with polynomial iteration complexity in the description of an extensive-form game $\calT$. If $\reg_i^T$ is the regret incurred by player $i \in \range{n}$, there is an extensive-form game $\calT$ and an absolute constant $\epsilon > 0$ such that at least $T \geq 2^{\log_2^{1/2 - o(1)}|\calT|}$ repetitions are needed so that $\frac{1}{T} \max_{1 \leq i \leq n} \reg_i^T \leq \epsilon$, unless ETH for $\PPAD$ (\Cref{conj:ETH}) is false.
\end{restatable}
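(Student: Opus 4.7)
The plan is to derive Corollary~\ref{cor:noregret-efgs} as a direct consequence of Theorem~\ref{theorem:sparseCCE} via the classical folklore equivalence between no-regret dynamics and CCE. Let $\epsilon > 0$ be the absolute constant for which Theorem~\ref{theorem:sparseCCE} rules out polynomial-time computation of a $2^{\log_2^{1/2 - o(1)}|\calT|}$-sparse $\epsilon$-CCE. Arguing by contrapositive, suppose that for every extensive-form game $\calT$ each player can run a $\poly(|\calT|)$-time-per-iteration algorithm such that $\frac{1}{T}\max_{1 \leq i \leq n} \reg_i^T \leq \epsilon$ after some horizon $T < 2^{\log_2^{1/2 - o(1)}|\calT|}$.

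Writing $\vx_i^t$ for the strategy of player $i$ in round $t$, form the empirical correlated distribution $\hat{\sigma} \defeq \frac{1}{T}\sum_{t=1}^T \bigotimes_{i=1}^n \vx_i^t$. By construction $\hat{\sigma}$ is $T$-sparse, being a uniform mixture of $T$ product distributions, and the standard linearity-of-expectation calculation shows that for any player $i$ and any unilateral deviation $\vx_i^\star$, the deviation gain against $\hat{\sigma}$ equals $\frac{1}{T}\sum_{t=1}^T [u_i(\vx_i^\star, \vx_{-i}^t) - u_i(\vx_i^t, \vx_{-i}^t)] \leq \reg_i^T/T \leq \epsilon$, so $\hat{\sigma}$ is an $\epsilon$-CCE of $\calT$.

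Since $\log_2^{1/2 - o(1)}|\calT| = o(\log_2|\calT|)$, the horizon $T$ is in fact subpolynomial in $|\calT|$, and hence simulating the dynamics and writing down $\hat{\sigma}$ takes only $T \cdot \poly(|\calT|) = \poly(|\calT|)$ total time. Instantiating $\calT$ on the family of hard instances produced by the reduction underlying Theorem~\ref{theorem:sparseCCE}, we obtain a polynomial-time procedure that outputs a $T$-sparse---and \emph{a fortiori} a $2^{\log_2^{1/2 - o(1)}|\calT|}$-sparse---$\epsilon$-CCE, contradicting Theorem~\ref{theorem:sparseCCE} and therefore ETH for $\PPAD$.

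The argument is essentially bookkeeping, since all the genuine hardness content has been front-loaded into Theorem~\ref{theorem:sparseCCE}. The only mildly delicate point worth verifying is that the two $o(1)$ slacks---the one in the sparsity parameter guaranteed by Theorem~\ref{theorem:sparseCCE} and the one in the horizon lower bound claimed in the corollary---line up so that $T \cdot \poly(|\calT|)$ really does collapse to $\poly(|\calT|)$; because the sparsity regime of Theorem~\ref{theorem:sparseCCE} and the horizon regime of the corollary are literally the same expression, this alignment is transparent, and the reduction imposes no further restrictions on the class of learning algorithms beyond polynomial per-round complexity.
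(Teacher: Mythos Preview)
Your proof is correct and takes essentially the same approach as the paper: the time-averaged play is a $T$-sparse $\epsilon$-CCE by Proposition~\ref{prop:folklore}, and since $T$ is subpolynomial in $|\calT|$ the total running time is polynomial, contradicting Theorem~\ref{theorem:sparseCCE}. The only detail the paper makes explicit that you leave implicit is that the stopping time can be detected in $\poly(|\calT|,T)$ time by computing best responses to check whether $\frac{1}{t}\max_i \reg_i^t \leq \epsilon$ at each round.
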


There are many compelling aspects of \Cref{cor:noregret-efgs} worth stressing. First, it applies even in a centralized model well beyond the online and decentralized learning framework. As a concrete example, \Cref{cor:noregret-efgs} applies even if the dynamics are \emph{alternating} instead of simultaneous. Alternation has been a remarkably successful ingredient in practical solvers~\citep{Tammelin15:Solving}, but it violates the online nature of the problem; indeed, the player who gets to play last has complete information about the current reward function. Nevertheless, even alternating dynamics are subject to the barriers imposed by \Cref{cor:noregret-efgs}. Furthermore, as we point out in~\Cref{remark:non-uniform}, \Cref{cor:noregret-efgs} can be extended even if one considers a more general non-uniform notion of regret, which has been observed to lead to significantly faster convergence in practice~\citep{Brown19:Solving}. Another interesting feature of \Cref{cor:noregret-efgs} is that players are allowed to store a polynomial amount of information regarding past rewards. This is considerably stronger---in that designing lower bounds is harder---than the model of~\citet{Daskalakis15:Near} wherein only a constant number of prior rewards can be stored; that assumption was made by~\citet{Daskalakis15:Near} to preclude trivial exploration strategies in two-player zero-sum games whereby players first determine the entire payoff matrix, and then compute a minimax strategy with the information gathered. In contrast, such an exploration strategy is a legitimate possibility in the context of \Cref{cor:noregret-efgs}. As such, the model we consider here is so permissive that no hardness results can be established for no-regret learning in two-player zero-sum games, simply because there are polynomial-time algorithms for computing Nash equilibria in such games.

Returning to \Cref{theorem:informalmwu}, the key connection is that algorithms such as ($\mathtt{O}$)$\mwu$ can be efficiently simulated on the induced normal-form representation of the extensive-form game~\citep{Farina22:Kernelized}. As such, \Cref{theorem:informalmwu} turns out to be a consequence of \Cref{cor:noregret-efgs}. As a result, \Cref{theorem:informalmwu} applies more broadly to any class of algorithms simulated on the induced normal form with per-iteration complexity polynomial in the representation of the underlying extensive-form game.

\paragraph{Technical approach} From a technical standpoint, we follow the approach of~\citet{foster2023hardness}, who proved that computing sparse CCE in Markov (aka. stochastic) games is computationally hard even when targeting a polynomial sparsity. A crucial detail here is that \citet{foster2023hardness} define CCE by allowing potentially non-Markovian deviations, for otherwise polynomial algorithms do exist~\citep{Erez23:Regret}; this already separates regret minimization in Markov games from extensive-form games. The key observation of~\citet{foster2023hardness} is that sparse CCE in general-sum Markov games can be leveraged to efficiently compute \emph{Nash equilibria} in general-sum (normal-form) games, thereby confronting immediate computational barriers~\citep{Daskalakis09:Complexity,Chen09:Settling,Rubinstein18:Inapproximability}. Following this connection, we establish a similar reduction: we show that for any two-player $m$-action game $\calG$ there is an extensive-form game $\calT = \calT(\calG)$ (\Cref{def:lifted-EFG}) with the property that i) a $T$-sparse $\epsilon$-CCE in $\calT$ induces an $O(\epsilon)$-NE in $\calG$ (\Cref{theorem:inference-EFGs}), and ii) the description of $\calT$ is of the order $m^{\log T/\epsilon
^2}$. The key idea is that by repeating the underlying game $\calG$ multiple times, a potentially deviating player could approximately discern the product distribution the rest of the players prescribe to, even though their randomization is unbeknownst to the deviator. In turn, this essentially forces a CCE in $\calT$ to contain a Nash equilibrium strategy for $\calG$ by virtue of a reduction due to~\citet{Borgs10:Myth}; otherwise, there would exist a deviation with a significant profit in $\calT$, contradicting the assumption that the original mixture of product distributions constitutes a CCE. This argument is the crux of the entire approach, and---following~\citet{foster2023hardness}---relies on some classical results on \emph{online density estimation}, namely \emph{Vovk's aggregating algorithm}~\citep{Vovk90:Aggregating}. In particular, Vovk's algorithm guarantees that a deviating player can identify, within $\epsilon$ total variation distance in expectation, the strategy of the rest of the players after $H = O(\frac{\log T}{\epsilon^2})$ repetitions of the game.

\subsection{Further related work}
\label{sec:related}

The line of work endeavoring to characterize the performance of no-regret learners in games, beyond the adversarial regime~\citep{Cesa-Bianchi06:Prediction,Blum07:Learning}, was pioneered by~\citet{Daskalakis15:Near} in the context of two-player zero-sum games. Thereafter, it has attracted considerable interest in the literature~\citep{Farina19:Stable,Piliouras22:Beyond,Rakhlin13:Optimization,Syrgkanis15:Fast,Hsieh21:Adaptive,Hsieh22:No,Chen20:Hedging,Farina21:Faster,Kangarshahi18:Lets,Foster16:Learning,Zhang22:Policy,Yang23:Convergence,Daskalakis22:Fast}, culminating in the breakthrough result of~\citet{Daskalakis21:Near} highlighted earlier in \Cref{sec:intro}.

Yet, despite the significant progress, little is known in terms of lower bounds, with some notable exceptions. First, \citet{Syrgkanis15:Fast} showed that if one player follows $\mwu$ and the other player is best responding in the context of a two-player zero-sum game, one of the players must incur $\Omega(\sqrt{T})$ regret, no matter how the learning rate is set. With a more elaborate argument, \citet{Chen20:Hedging} established the same lower bound when both players follow $\mwu$ in a two-player game, again for any choice of learning rate. Both of those results were constructed based on binary-action games, and as such, they did not provide any meaningful lower bounds in terms of the dimensions of the game. Furthermore, \citet{Hadiji23:Towards} recently investigated the first-order query complexity of computing $\epsilon$-Nash equilibria in $m \times m$ two-player zero-sum games (\emph{cf.} \citep{Goldberg23:Lower,Fearnley16:Finding,Babichenko16:Query,Fearnley15:Learning,Maiti23:Query}). They showed that $\Omega(m)$ (first-order) queries are needed when $\epsilon = 0$, and roughly $\Omega(\log ( \frac{1}{m \epsilon} ) )$ when $\epsilon = O(\frac{1}{m^4})$, thereby leaving a substantial gap with the upper bound of $O(\frac{\log m}{\epsilon})$ attained via $\omwu$. The lower bounds we establish in this paper are quite different, being of computational nature. Indeed, we have already explained that in the more permissive (potentially centralized) model that we study here, there are no obstacles in attaining zero regret in a single iteration of a two-player zero-sum game.

Finally, one of our main results (\Cref{theorem:sparseCCE}) establishes a superpolynomial separation between no-regret learning in extensive- and normal-form games. It is worth stressing thus that learning in extensive-form games has been a particularly popular research topic in the literature (\emph{e.g.},~\citep{Zinkevich07:Regret,Farina21:Better,Kozuno21:Learning,Fiegel23:Adapting,Bai22:Near,Bai22:Efficient,Farina19:Regret,Morrill21:Efficient,Morrill21:Hindsight,Farina19:Stable,Gordon08:No,Heinrich15:Fictitious,Dudik09:SamplingBased,Tang23:Regret,Farina21:Faster,Farina22:Simple,Song22:Sample,Zhang21:Finding,Qi23:Pure,Fiegel23:Local}, and the numerous references therein). This emphasis stems to a large extent from the fact that the extensive-form representation is more suited to capture realistic settings that feature sequential moves and imperfect information.
\section{Preliminaries}
\label{sec:prels}

In this section, we provide the necessary background on normal- and extensive-form games, as well as the setting of online density estimation. Specifically, \Cref{sec:sparseCCE} formalizes the refinement of CCE we focus on; \Cref{sec:efgs} introduces the extensive-form representation; and \Cref{sec:ode} describes Vovk's aggregation algorithm~\citep{Vovk90:Aggregating} in the context of online density estimation. For further background on learning in games, we refer to the excellent book of~\citet{Cesa-Bianchi06:Prediction}.

\paragraph{Conventions} We let $\N = \{1, 2, \dots, \}$ be the set of natural numbers. We oftentimes use the $O(\cdot), \Omega(\cdot), \Theta(\cdot)$ notation with a non-asymptotic semantic so as to suppress absolute constants. For a finite set $S$, we let $\uni(S)$ denote the uniform distribution over $S$. $\log (\cdot)$ denotes the natural logarithmic (with base $e$). We generally use subscripts to indicate the player and superscripts (with parentheses) to specify the (discrete) time index.

%For an $n$-dimensional vector $\va,$ we let $\va_{-i}$ denote the $(n-1)$-dimensional vector $(a_j : j \neq i).$ I added that clarification a bit later

\subsection{Sparse coarse correlated equilibria}
\label{sec:sparseCCE}

As we explained earlier in our introduction, our main focus here is on the problem of computing a refinement of the standard coarse correlated equilibrium (CCE)~\citep{Moulin78:Strategically} satisfying a certain sparsity constraint. To formally introduce that refinement, let us first introduce the normal-form representation of finite games.

\paragraph{Normal-form games} Let $\calG $ be a finite $n$-player game represented in normal form. The set of players will be denoted by $\range{n} \defeq \{1, 2, \dots, n\}$, and will be indexed by variables $i, i' \in \range{n}$. In the normal-form representation, every player $i \in \range{n}$ has a finite and nonempty set of available actions $\calA_i$. For notational convenience, we will let $m \triangleq \max_{1 \leq i \leq n} |\calA_i|$. For every possible combination of actions $\va \defeq (a_1, \dots, a_n) \in \bigtimes_{i=1}^n \calA_i$, there is a utility function $u_i : \bigtimes_{i=1}^n \calA_i \to [-1, 1]$ that specifies the utility (or reward) $u_i(\va)$ of player $i \in \range{n}$ under that joint action; the range of the utilities here can be normalized to be in $[-1, 1]$ without any loss of generality. Each player $i \in \range{n}$ is allowed to randomize by selecting a (mixed) strategy, a distribution over the available actions: $\vx_i \in \Delta(\calA_i) \defeq \{ \vx_i \in \R_{\geq 0}^{\calA_i} : \sum_{a_i \in \calA_i} \vx_i[a_i] = 1 \}$. For a joint strategy $(\vx_1, \dots, \vx_n) \in \bigtimes_{i=1}^n \Delta(\calA_i)$, we will denote by $\bigotimes_{i=1}^n \vx_i$ the \emph{product} distribution on $\Delta(\bigtimes_{i=1}^n \calA_i)$ defined so that $(\bigotimes_{i=1}^n \vx_i) [(a_1, \dots, a_n)] \defeq \prod_{i=1}^n \vx_i[a_i]$.

We are now ready to recall the standard concept of an approximate coarse correlated equilibrium (CCE)~\citep{Moulin78:Strategically,Aumann74:Subjectivity}. Below, for a joint action $\va = (a_1, \dots, a_n) \in \bigtimes_{i=1}^n \calA_i$, we use the usual shorthand notation $\va_{-i} \defeq (a_1, \dots, a_{i-1}, a_{i+1}, \dots, a_n) \in \bigtimes_{i' \neq i} \calA_{i'}$.

\begin{definition}
[Coarse correlated equilibrium]
    \label{def:CCE}
    Let $\calG$ be an $n$-player game in normal form. A distribution over joint action profiles $\vec{\mu} \in \Delta\left( \bigtimes_{i=1}^n \calA_i \right)$ is said to be an \emph{$\epsilon$-coarse correlated equilibrium ($\epsilon$-CCE)}, with $\epsilon \in \R$, if for any player $i \in \range{n}$ and any deviation $a_i' \in \calA_i$,\footnote{As is standard, we abuse notation by parsing $u_i(a_i', \va_{-i})$ as $u_i(a_1, \dots, a_{i-1}, a_i', a_{i+1}, \dots, a_n)$; the same convention is adopted for the mixed extension of the utilities as well.}
    \begin{equation}
        \label{eq:CCE}
        \E_{\vec{a} \sim \vec{\mu}}[u_i(\vec{a})] \geq \E_{\vec{a} \sim \vec{\mu}}[u_i(a_i', \vec{a}_{-i})] - \epsilon.
    \end{equation}
\end{definition}

For convenience, we will sometimes use the shorthand notation $u_i(\vmu) \defeq \E_{\vec{a} \sim \vec{\mu}}[u_i(\vec{a})]$ and $u_i(a_i', \vmu_{-i}) \defeq \E_{\va \sim \vec{\mu}} [u_i(a_i', \va_{-i})]$. A CCE is typically modeled via a trusted third party---a so-called \emph{mediator}---who privately makes recommendations to each player; \eqref{eq:CCE} guarantees that no player can gain more than an additive factor of $\epsilon$ through a (unilateral) deviation, \emph{before} actually observing the mediator's recommendation. A $0$-CCE will simply be referred to as a CCE. In this context, we will be concerned with a refinement of \Cref{def:CCE} wherein a certain sparsity constraint is imposed, in the following formal sense.

\begin{definition}
[Sparse CCE]
    \label{def:sparseCCE}
    Let $\calG$ be an $n$-player game in normal form. A distribution over joint action profiles $\vec{\mu} \in \Delta\left( \bigtimes_{i=1}^n \calA_i \right)$ satisfying \Cref{def:CCE} is said to be \emph{$k$-sparse} if it can be expressed as a uniform mixture of $k$ product distributions; that is, there exist $(\vx^{(1)}_1, \dots, \vx^{(1)}_n), \dots, (\vx_1^{(k)}, \dots, \vx_n^{(k)}) \in \bigtimes_{i=1}^n \Delta(\calA_i)$ such that $\vmu = \frac{1}{k} \sum_{\kappa = 1}^k \bigotimes_{i=1}^n \vx_i^{(\kappa)}$.
\end{definition}

From a computational standpoint, a $\poly(n,m)$-sparse CCE can be identified in polynomial time for any game of \emph{polynomial type}---meaning that $m$ is a polynomial with respect to the underlying description---satisfying the polynomial expectation property~\citep{Papadimitriou08:Computing,Jiang15:Polynomial}; the latter property postulates that for any product distribution $\vx$ with a polynomial representation, the expectation $\E_{\va \sim \vx} [u_i(\va)]$ can be computed in time $\poly(n, m)$, an assumption which is known to be satisfied in most succinct games of interest~\citep{Papadimitriou08:Computing}. On the other end of the spectrum, a $1$-sparse CCE is, by definition, a Nash equilibrium, thereby making the problem $\PPAD$-hard~\citep{Daskalakis09:Complexity}. Furthermore, specific no-regret learning algorithms, such as multiplicative weights update, yield an $O(\frac{\log m}{\epsilon^2})$-sparse $\epsilon$-CCE in time $\poly(n, m, 1/\epsilon)$, again under the polynomial expectation property. As a result, the key question that arises is to characterize the threshold of computational tractability in terms of the sparsity parameter $k$.

\paragraph{Online learning in games} Before we proceed, we also point out the folklore connection between no-regret learning and CCE. In the online learning framework with full feedback, every repetition $t \in \N$ finds a player $i \in \range{n}$ selecting an action $\vx_i^{(t)} \in \Delta(\calA_i)$, and subsequently observing as feedback from the environment the utility function $\vx_i \mapsto \langle \vx_i, \vu_i^{(t)} \rangle  $, where $\vu_i^{(t)} \in [-1, 1]^{\calA_i}$. The \emph{regret} of player $i$ under a time horizon $T \in \N$ is defined as 
\begin{equation*}
    \reg_i^T \defeq \max_{\vx^\star_i \in \Delta(\calA_i)} \sum_{t=1}^T \langle \vx^\star_i - \vx_i^{(t)}, \vu_i^{(t)} \rangle.
\end{equation*}

Specifically, in the setting of learning in games, the utility $\vu_i^{(t)}$ observed by player $i \in \range{n}$ is defined so that
\begin{equation}
    \label{eq:util}
    \vu_i^{(t)}[a_i] \defeq \E_{\va_{-i} \sim \vx^{(t)}_{-i}} [u_i(a_i, \va_{-i})],
\end{equation}
where $\vx^{(t)}_{-i}$ is the joint strategy of the other players at time $t$. In this context, the connection between CCE and no-regret learning is summarized in the following folklore fact~\citep{Blum07:Learning,Cesa-Bianchi06:Prediction}.

\begin{proposition}
    \label{prop:folklore}
    Consider an $n$-player game in normal form, and suppose that each player $i \in \range{n}$ produces the sequence of strategies $(\vx_i^{(t)})$ under the sequence of utilities given by~\eqref{eq:util}. If each player $i $ incurs regret $\reg_i^T$ after $T \in \N$ repetitions, then the correlated distribution $\barvmu \defeq \frac{1}{T} \sum_{t=1}^T \bigotimes_{i=1}^n \vx_i^{(t)}$ is a $\frac{1}{T} \max_{1 \leq i \leq n} \reg_i^T$-CCE.
\end{proposition}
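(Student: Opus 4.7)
The plan is to fix an arbitrary player $i \in \range{n}$ and an arbitrary deviation $a_i' \in \calA_i$, and to directly compute both sides of the CCE inequality \eqref{eq:CCE} under the empirical product mixture $\barvmu = \frac{1}{T} \sum_{t=1}^T \bigotimes_{i'=1}^n \vx_{i'}^{(t)}$, relating the resulting gap to a single-player regret expression against the point mass $\mathbf{e}_{a_i'}$.

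First, I would expand the left-hand side of \eqref{eq:CCE}. By linearity of expectation and the definition of $\barvmu$,
\[
\E_{\va \sim \barvmu}[u_i(\va)] = \frac{1}{T}\sum_{t=1}^T \E_{\va \sim \bigotimes_{i'} \vx_{i'}^{(t)}}[u_i(\va)] = \frac{1}{T}\sum_{t=1}^T \langle \vx_i^{(t)}, \vu_i^{(t)} \rangle,
\]
where the last equality uses the definition of the observed utility in \eqref{eq:util}: summing $\vx_i^{(t)}[a_i]$ against $\vu_i^{(t)}[a_i] = \E_{\va_{-i} \sim \vx_{-i}^{(t)}}[u_i(a_i, \va_{-i})]$ exactly reproduces $\E_{\va \sim \bigotimes_{i'} \vx_{i'}^{(t)}}[u_i(\va)]$ by independence of the product.

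Second, I would expand the deviation term analogously. Here, integrating out only the opponents' marginals and fixing $a_i'$,
\[
\E_{\va \sim \barvmu}[u_i(a_i', \va_{-i})] = \frac{1}{T}\sum_{t=1}^T \E_{\va_{-i} \sim \vx_{-i}^{(t)}}[u_i(a_i', \va_{-i})] = \frac{1}{T}\sum_{t=1}^T \vu_i^{(t)}[a_i'] = \frac{1}{T}\sum_{t=1}^T \langle \mathbf{e}_{a_i'}, \vu_i^{(t)} \rangle,
\]
again by \eqref{eq:util}, where $\mathbf{e}_{a_i'} \in \Delta(\calA_i)$ denotes the point mass on $a_i'$.

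Third, I would subtract: the deviation gain equals $\frac{1}{T}\sum_{t=1}^T \langle \mathbf{e}_{a_i'} - \vx_i^{(t)}, \vu_i^{(t)} \rangle$, which is bounded above by $\frac{1}{T}\max_{\vx_i^\star \in \Delta(\calA_i)} \sum_{t=1}^T \langle \vx_i^\star - \vx_i^{(t)}, \vu_i^{(t)} \rangle = \frac{\reg_i^T}{T}$, since $\mathbf{e}_{a_i'} \in \Delta(\calA_i)$ is one feasible comparator. Taking the maximum over $a_i' \in \calA_i$ and then over $i \in \range{n}$ yields the claim that $\barvmu$ is a $\frac{1}{T}\max_{1 \leq i \leq n} \reg_i^T$-CCE.

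There is no real obstacle; the argument is purely bookkeeping, and the only subtle point is to verify that the product structure of each round $t$ is precisely what allows one to factor the opponents' marginals out of $u_i$ in both the equilibrium and the deviation terms, thereby matching the single-player regret against $\vu_i^{(t)}$ exactly.
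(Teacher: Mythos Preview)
Your argument is correct and is exactly the standard unpacking of the folklore connection between regret and $\epsilon$-CCE. The paper itself does not supply a proof of this proposition---it is stated as a folklore fact with citations to \citep{Blum07:Learning,Cesa-Bianchi06:Prediction}---so there is nothing to compare against beyond noting that your derivation is the canonical one.
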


\subsection{Extensive-form games}
\label{sec:efgs}

While every finite game can be represented in normal form, such a representation can be dramatically inefficient in more structured classes of games. Specifically, in scenarios involving sequential moves and imperfect information, the canonical representation is the so-called \emph{extensive form}~\citep{Shoham08:Multiagent}. In such games, a rooted and directed tree $\calT$ is explicitly given as part of the input. We let $\calH = \calH(\calT)$ denote the set of non-leaf nodes of $\calT$. Each node $\tau \in \calH$ that is not a leaf of $\calT$ is uniquely associated with a player $i \in \range{n}$ who selects an action from a finite and nonempty set of available actions $\calA_\tau$;\footnote{In general, extensive-form games also feature \emph{chance moves} (for example, the roll of a dice), which can be modeled via an additional fictitious ``player;'' our lower bounds in the sequel do not have to involve chance moves. Nevertheless, it is worth noting that the addition of chance moves is known to crucially affect the computational complexity of certain problems~\citep{Stengel08:Extensive}.} the set of all nodes where player $i$ acts will be denoted by $\calH_i$. The leaves of the tree $\calT$, which are also referred to as terminal nodes, are denoted by $\calZ$. When the game transitions to a terminal node in $\calZ$, utilities are assigned to each player $i$, as specified by an arbitrary utility function $u_i : \calZ \to [-1, 1]$.

To model imperfect information, the set of nodes $\calH_i$ belonging to player $i$ is partitioned into \emph{information sets} $\calJ_i$; each information set groups together nodes that player $i$ cannot distinguish based on the information structure of the game. As is standard, we also tacitly assume throughout this paper that players have \emph{perfect recall}, in that players never forget acquired information; in the absence of perfect recall, many natural problems immediately become $\NP$-hard~\citep{Koller94:Fast,Tewolde23:Computational,Zhang22:Team,Chu01:NP}. In what follows, we will use $\calT$ to represent the underlying extensive-form game, with the understanding that $\calT$ indeed encodes all the information pertinent for its complete description.

\begin{remark}[Simultaneous moves]
    \label{remark:sim}
    Although the aforedescribed standard formulation of extensive-form games features solely sequential moves, in that each node is associated with a single player, one can readily model simultaneous moves as well through the use of imperfect information. We will use this standard fact in the sequel to also incorporate simultaneous moves in order to simplify the exposition.
\end{remark}

A strategy for a player $i \in \range{n}$ is a mapping $\calJ_i \ni j \to \Delta(\calA_j)$. It turns out that the set of each player's strategies can be represented compactly via a convex polytope~\citep{Stengel96:Efficient,Romanovskii62:Reduction}, namely the \emph{sequence-form polytope} $\calX_i$, so that the utility of each player can be expressed as a linear function (assuming that the rest of the players are fixed). This has been a crucial observation for designing efficient algorithms for a number of fundamental problems in extensive-form games. With a slight abuse of notation, for a sequence-form strategy $\vx_i \in \calX_i$ we will write $\vx_{i, j}$ to denote the probability distribution over $\Delta(\calA_j)$ induced by $\vx_i$ at information set $j \in \calJ_i$; if $\vx_{i, j}$ is not well-defined, in that $\vx_i$ assigns zero probability to the subtree of $j$, we may take $\vx_{i, j}$ to be an arbitrary distribution. For a joint strategy $(\vx_1, \dots, \vx_n) \in \bigtimes_{i=1}^n \calX_i$, we use again the notation $\bigotimes_{i=1}^n \vx_i$ the express the product distribution on the induced normal-form game, which is always represented implicitly.

Extensive-form games are not of polynomial type in that a player's number of pure strategies is typically exponential in the description of the game, rendering the induced normal-form representation largely inefficient. Nevertheless, polynomial (in the size of the tree $\calT$) algorithms for computing (exact) CCE are known to exist. In particular, \citet{Huang08:Computing} have shown how to adapt the algorithm of~\citet{Papadimitriou08:Computing} for certain correlated equilibrium concepts in extensive-form games. (CCE per \Cref{def:sparseCCE} is typically referred to as \emph{normal-form} CCE (NFCCE) to differentiate with other notions of coarse correlation in extensive-form games~\citep{Farina20:Coarse}.) We clarify that a CCE in extensive-form games can be indeed defined via \Cref{def:CCE} through the induced normal-form representation. Our results here revolve around the complexity of the more refined concept introduced in~\Cref{def:sparseCCE}.

\subsection{Online density estimation} 
\label{sec:ode}

We finally conclude this section by recalling some basic results regarding online density estimation, which will be useful in the sequel. Following~\citet{foster2023hardness}, our proof will make use of \emph{Vovk’s aggregating algorithm} for online density estimation~\citep{Vovk90:Aggregating}. More precisely, the setting here is as follows. There are two players, the \emph{nature} and the \emph{learner}. There is also a set $\calO$ called the \emph{outcome space}, and a set $\calC$ referred to as the \emph{context space}; both can be assumed to be finite for our applications. The interaction between the learner and the nature proceeds for $h = 1, 2, \dots, H$ as follows.

\begin{enumerate}
    \item Nature first reveals a context $c_{h} \in \calC$;
    \item the learner then predicts a distribution $\hatvq_{h} \in \Delta(\calO)$ over outcomes based on the observed context $c_{h}$; and
    \item Nature chooses an outcome $o_{h} \in \calO$, and the learner incurs a logarithmic loss defined as $\elllog_{h}(\hatvq_{h}) \defeq \log \left( \frac{1}{\hatvq_{h}(o_{h})} \right)$.
\end{enumerate}

We measure the performance of the learner via the \emph{regret} against a (finite) set of experts $\calE$. In particular, every expert $e \in \calE$ corresponds to a function $\vp^{(e)} : \calC \to \Delta(\calO)$, so that the regret of an algorithm with respect to the expert class $\calE$ is defined as
\begin{equation*}
    \reg_\calE^H \defeq \sum_{h=1}^H \elllog_{h}(\hatvq_{h}) - \min_{e \in \calE} \left\{ \sum_{h=1}^H \elllog_{h}\left(\vec{p}^{(e)}(c_{h}) \right)  \right\}.
\end{equation*}

It is important to note here that the algorithm of the learner has access to the expert predictions $\{ \vec{p}^{(e)}(c_{h}) \}_{e \in \calE}$. In this context, Vovk's aggregating algorithm makes predictions for $h = 1, 2, \dots, H$ via
\begin{equation}
    \label{eq:Vovk}
    \hatvq_{h} \defeq \E_{e \sim \tilvq_{h}}[\vec{p}^{(e)}(c_{h})], \textrm{ where } \tilvq_{h}^{(e)} \defeq \frac{ \exp \left( - \sum_{\upsilon=1}^{h-1} \elllog_{\upsilon}(\vec{p}^{(e)}(c_{\upsilon})) \right) }{ \sum_{e' \in \calE} \exp \left( - \sum_{\upsilon=1}^{h-1} \elllog_{\upsilon}(\vec{p}^{(e')}(c_{\upsilon})) \right) } \quad \forall e \in \calE.
\end{equation}

The convention above is that a summation with no terms is defined as $0$, so that $\tilvq_1$ is the uniform distribution over $\calE$. We further take $\log (\frac{1}{0^+}) = +\infty$ and $\exp(-\infty) = 0$; under realizability (see \Cref{prop:aggregation}), the denominator in~\eqref{eq:Vovk} can never be $0$, so~\eqref{eq:Vovk} is indeed well-defined.

The main guarantee we will use for the aggregation algorithm~\eqref{eq:Vovk} is summarized below. We recall first that the total variation distance between two discrete distributions $\vp, \vq \in \Delta(\calO)$ is defined as $\dtv(\vp, \vq) \defeq \frac{1}{2} \| \vp - \vq\|_1$.

\begin{proposition}[\cite{Vovk90:Aggregating}]
    \label{prop:aggregation}
    Suppose that the distribution of outcomes is realizable under some $e^\star \in \calE$; that is, $o_h \sim \vp^{(e^\star)}(c_h) \mid c_h$ for each $h \in \range{H}$. Then, the predictions $(\hatvq_h)_{1 \leq h \leq H}$ produced by the aggregation algorithm~\eqref{eq:Vovk} satisfy 
    \begin{equation*}
        \frac{1}{H} \sum_{h=1}^H \E \left[\dtv(\hatvq_{h}, \vec{p}^{(e^\star)}(c_{h})) \right] \leq \sqrt{ \frac{\log |\calE|}{H}},
    \end{equation*}
    where the expectation above is with respect to the underlying random process whereby nature selects the sequence of contexts $(c_1, \dots, c_H) \in \calC^H$.
\end{proposition}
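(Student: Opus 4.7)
The plan is to reduce the statement to the classical logarithmic-loss regret bound for Vovk's aggregating algorithm, and then to convert the resulting KL-divergence bound into a total-variation bound via Pinsker's inequality. The argument factors cleanly into three well-isolated ingredients, none of which should present a serious obstacle.

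First, I would establish the deterministic regret bound
\begin{equation*}
\sum_{h=1}^H \elllog_h(\hatvq_h) \;\leq\; \min_{e \in \calE} \sum_{h=1}^H \elllog_h(\vec{p}^{(e)}(c_h)) \;+\; \log |\calE|
\end{equation*}
via the standard potential argument. Set $W_h \defeq \sum_{e \in \calE} \exp\bigl(-\sum_{\upsilon=1}^{h-1} \elllog_\upsilon(\vec{p}^{(e)}(c_\upsilon))\bigr)$, so that $W_1 = |\calE|$. A direct calculation from the definition~\eqref{eq:Vovk} of $\tilvq_h$ gives $W_{h+1}/W_h = \sum_{e \in \calE} \tilvq_h^{(e)} \vec{p}^{(e)}(c_h)(o_h) = \hatvq_h(o_h) = \exp(-\elllog_h(\hatvq_h))$. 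Telescoping these ratios and using $W_{H+1} \geq \exp\bigl(-\min_{e \in \calE} \sum_{h=1}^H \elllog_h(\vec{p}^{(e)}(c_h))\bigr)$ delivers the bound. This step is entirely deterministic and makes no use of realizability.

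Second, I would use realizability to rewrite the expected regret against $e^\star$ as a sum of expected KL divergences. Since $\hatvq_h$ is measurable with respect to the history $(c_1, o_1, \dots, c_{h-1}, o_{h-1}, c_h)$, conditioning on this history and invoking $o_h \sim \vec{p}^{(e^\star)}(c_h)$ gives, directly from the definition of KL divergence, $\E\bigl[\elllog_h(\hatvq_h) - \elllog_h(\vec{p}^{(e^\star)}(c_h)) \,\big|\, \mathrm{history}\bigr] = \mathrm{KL}\bigl(\vec{p}^{(e^\star)}(c_h) \,\big\|\, \hatvq_h\bigr)$. Instantiating the deterministic regret bound at $e^\star$ and taking total expectations then yields $\sum_{h=1}^H \E\bigl[\mathrm{KL}\bigl(\vec{p}^{(e^\star)}(c_h) \,\big\|\, \hatvq_h\bigr)\bigr] \leq \log |\calE|$.

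Third, Pinsker's inequality $\dtv(\vp, \vq)^2 \leq \tfrac{1}{2} \mathrm{KL}(\vp \,\|\, \vq)$, combined with Cauchy--Schwarz and Jensen's inequality applied to $x \mapsto x^2$, produces
\begin{equation*}
\frac{1}{H} \sum_{h=1}^H \E\bigl[\dtv(\hatvq_h, \vec{p}^{(e^\star)}(c_h))\bigr] \;\leq\; \sqrt{\frac{1}{H} \sum_{h=1}^H \E\bigl[\dtv(\hatvq_h, \vec{p}^{(e^\star)}(c_h))^2\bigr]} \;\leq\; \sqrt{\frac{\log |\calE|}{2 H}},
\end{equation*}
which already implies the stated bound (indeed, slightly improves it by a factor of $\sqrt{2}$). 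There is no substantive obstacle in executing this plan: the proof is a packaging of three textbook ingredients, namely the log-sum-exp telescoping, the KL identity under realizability, and Pinsker's inequality. The only point requiring care is bookkeeping around the conditioning---making sure realizability is invoked at the correct $\sigma$-field so that the conditional expected log-loss gap collapses exactly to a KL divergence between $\vec{p}^{(e^\star)}(c_h)$ and $\hatvq_h$.
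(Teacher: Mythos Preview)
Your proof is correct and is essentially the standard argument for this result. The paper does not actually prove this proposition; it is stated with a citation to \citet{Vovk90:Aggregating} and used as a black box in the proof of \Cref{theorem:inference-EFGs}, so there is no paper proof to compare against beyond noting that your argument is the canonical one (and, as you observe, yields the slightly sharper constant $\sqrt{\log|\calE|/(2H)}$).
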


\section{Lower Bounds for No-Regret Learning in Games}
\label{sec:main}

In this section, we present our main results regarding the problem of computing sparse CCE in extensive-form games, as well as the implied lower bounds for no-regret learning in games.

To do so, we build on the reduction of~\citet{foster2023hardness} targeting Markov (aka. stochastic) games. In particular, we assume that we are given as input a two-player general-sum game $\calG$ where each player has $m \in \N$ actions; that is, $|\calA_1| = |\calA_2| = m \geq 2$. We may also posit that every entry in the payoff matrices, say $\mat{M}_1, \mat{M}_2 \in \Q^{\calA_1 \times \calA_2}$, can be represented with a number of bits polynomial in $m$. Further, we assume without any loss of generality that $|\mat{M}_1[a_1, a_2]|, |\mat{M}_2[a_1, a_2]| \leq 1$, for any combination of actions $(a_1, a_2) \in \calA_1 \times \calA_2$. The key idea of the reduction is to show that a sparse CCE in a suitably constructed extensive-form game $\calT = \calT(\calG)$ (described in \Cref{def:lifted-EFG}) can be used to obtain a Nash equilibrium in the original game $\calG$; in turn, the computational hardness of Nash equilibria in two-player games~\citep{Rubinstein18:Inapproximability} will preclude polynomial-time computation of sparse CCE under a certain sparsity regime (\Cref{theorem:sparseCCE}).

In what follows, \Cref{def:lifted-EFG} formally introduces the lifted extensive-form game $\calT(\calG)$; \Cref{sec:algo} describes the process whereby a sparse CCE in $\calT$ yields a Nash equilibrium in $\calG$ (\Cref{algo:EFG}); \Cref{sec:inference} establishes the correctness of \Cref{algo:EFG}; and \Cref{sec:implic} provides the main implications for computing sparse CCE in extensive-form games, as well as no-regret learning in normal- and extensive-form games.

\subsection{The lifted extensive-form game}
\label{def:lifted-EFG}

\citet{foster2023hardness} introduce two separate reductions in order to prove hardness results in Markov games, which differ depending on whether players' policies are allowed to be Markovian or not. In extensive-form games, strategies are of course not constrained to be Markovian since they can depend arbitrarily on the information available to that player. Accordingly, we will adapt the reduction of~\citet{foster2023hardness} that targets non-Markovian policies, which in turn is based on the reduction of~\citet{Borgs10:Myth}, leading to the lifted game described in this subsection.
    
As we explained in \Cref{remark:sim}, it will be convenient for our exposition to work with extensive-form games that include simultaneous moves; again, this comes without any essential loss since simultaneous moves can always be cast as sequential moves using imperfect information, a transformation that does not qualitatively alters our results.

Now, let $\calG$ be the original two-player game in normal form. The basic idea is to construct an extensive-form game $\calT = \calT(\calG)$ consisting of $H$ repetitions of $\calG$, for a sufficiently large parameter $H \in \N$ to be specified later (\Cref{theorem:inference-EFGs}). Following the approach of~\citet{foster2023hardness}, a key ingredient is the addition of an auxiliary player, namely the \emph{Kibitzer}, which is in turn based on the hardness result of~\citet{Borgs10:Myth} pertaining the computation of Nash equilibria in repeated games. Specifically, \citet{Borgs10:Myth} reduced computing Nash equilibria in two-player normal-form games to computing Nash equilibria in three-player repeated games, thereby establishing that---the folk theorem notwithstanding---the latter problem is hard. Interestingly, this is not the case for two-player repeated games where polynomial-time algorithms do exist~\citep{Littman03:polynomial}; this suggests that proving hardness results for two-player extensive-form games could require a very different approach. So, returning to our reduction, $\calT$ here is a three-player (extensive-form) game. By convention, player $\kib \defeq 3$ will represent the Kibitzer; we often use the symbol $\kib$ instead of the index $i = 3$ for convenience in the presentation.

In each possible decision node (or simply state) $s \in \calS$ of $\calT$ each player simultaneously selects an action.\footnote{Here, we denote decision nodes with the symbol $\calS$ instead of $\calH$ as in \Cref{sec:efgs} because $\calT$ features simultaneous moves as well. We clarify that $\calS$ contains precisely the information sets of each player in the sequential representation.} Specifically, each of the first two players select actions from $\calA_1$ and $\calA_2$, respectively (where those action sets are as given in the original game $\calG$), while the action set of the Kibitzer, $\calA_\kib$, is defined as
\begin{equation*}
    \calA_\kib \defeq \{(i, a_i) : i \in [2], a_i \in \calA_i \}.
\end{equation*}

As such, each state $s \in \calS$ is in bijective correspondence with a sequence of joint actions; it is critical in this construction that each player gets to observe the other players' actions from earlier rounds, for reasons that will become clear shortly. Further, the utilities of the players are then defined as follows. For a repetition $h \in \range{H}$ and a joint action profile $(a_{1, h}, a_{2, h}, a_{\kib, h}) \in \calA_1 \times \calA_2 \times \calA_\kib$, with $a_{\kib, h} = (1, a'_{1, h})$, we define
\begin{equation*}
    u_{i, h}(a_{1, h}, a_{2, h}, a_{\kib, h}) \defeq 
    \begin{cases}
    0 & : i \neq {1}, \kib, \\
    \frac{1}{H} \left( \mat{M}_1[a_{1, h}, a_{2, h}] - \mat{M}_1[a_{1, h}' , a_{2, h}] \right) & : i = 1, \\
    \frac{1}{H} \left( \mat{M}_1[a_{1, h}', a_{2, h}] - \mat{M}_1[a_{1, h}, a_{2, h}] \right) & : i = \kib;
    \end{cases}
\end{equation*}
the utility functions are defined symmetrically when $a_{\kib, h} = (2, a'_{2, h})$. Specifically, we assume here that those rewards are given to the corresponding node in the game tree; while it is common---as we described earlier in \Cref{sec:efgs}---to assign utilities only at leaf nodes, it is clear that one can always push all the utilities in the corresponding leaf nodes without altering the equilibria of the game. Indeed, for a sequence of joint actions $(\va_1, \dots, \va_H)$, which uniquely specifies a leaf node $z \in \calZ$, the cumulative utility can be defined as $u_i : \calZ \ni z \mapsto \sum_{h=1}^H u_{i, h}(a_{1, h}, a_{2, h}, a_{\kib, h})$. We note that normalizing by the factor $H$ in $u_{i, h}(\cdot)$ above ensures that the cumulative payoffs in $\calT$ are indeed in $[-1, 1]$. We further remark that $\calT$ is a zero-sum game since $\sum_{i=1}^3 u_i(z) = 0$, for any $z \in \calZ$. Finally, it is evident that $\calT$ is indeed a perfect-recall game. 

We next state a straightforward fact, which follows directly from the definition of each utility function $u_{i, h}(\cdot)$.

\begin{lemma}
    \label{lemma:nonnegativity}
    For any repetition $h \in \range{H}$, player $i \in \range{3}$, and strategies $\vx_{-i, h} \in \bigtimes_{i' \neq i} \Delta(\calA_{i'})$, it holds that
    $\max_{a_{i, h} \in \calA_i} \E_{\va_{-i, h} \sim \vx_{-i, h}} [u_{i, h}( a_{i, h}, \va_{-i, h} )] \geq 0$.
\end{lemma}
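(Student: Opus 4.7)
The plan is to handle the three cases $i \in \{1, 2, \kib\}$ separately and, in each one, exhibit a single pure action $a_{i,h}^\star \in \calA_i$ whose expected utility against $\vx_{-i, h}$ is non-negative; since the maximum over $\calA_i$ can only be larger, this suffices. The underlying intuition is that the per-round utility $u_{i,h}$ is (up to the factor $1/H$) precisely a one-shot regret comparison between a realized action and the Kibitzer's alternative suggestion, so whichever party gets to best-respond to the relevant marginal of $\vx_{-i, h}$ is automatically guaranteed a non-negative payoff. Throughout, I write $\vx_{-i,h} = (\vx_{i',h})_{i' \neq i}$ for the components of the product strategy.

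For $i = 1$, I would pick $a_{1, h}^\star \in \argmax_{a \in \calA_1} \E_{a_{2} \sim \vx_{2, h}}[\mat{M}_1[a, a_{2}]]$. Conditional on the Kibitzer's realization $a_{\kib, h}$, the payoff to player $1$ is $0$ whenever $a_{\kib, h} = (2, a'_{2, h})$, and equals $\frac{1}{H}\bigl(\E_{a_{2}}[\mat{M}_1[a_{1, h}^\star, a_{2}]] - \E_{a_{2}}[\mat{M}_1[a'_{1, h}, a_{2}]]\bigr)$ whenever $a_{\kib, h} = (1, a'_{1, h})$. The optimality of $a_{1, h}^\star$ makes the second expression non-negative for every $a'_{1, h}$, and averaging over $a_{\kib, h} \sim \vx_{\kib, h}$ preserves non-negativity. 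The case $i = 2$ is entirely symmetric via $a_{2, h}^\star \in \argmax_{a \in \calA_2} \E_{a_{1} \sim \vx_{1, h}}[\mat{M}_2[a_{1}, a]]$ and the symmetric branch of the utility definition.

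For $i = \kib$, I would instead choose the pure Kibitzer action $a_{\kib,h}^\star = (1, a'_{1, h})$ where $a'_{1, h} \in \argmax_{a \in \calA_1} \E_{a_{2} \sim \vx_{2, h}}[\mat{M}_1[a, a_{2}]]$. Then the expected payoff is $\frac{1}{H}\bigl(\E_{a_{2}}[\mat{M}_1[a'_{1, h}, a_{2}]] - \E_{a_{1}, a_{2}}[\mat{M}_1[a_{1}, a_{2}]]\bigr)$, which is non-negative by the elementary max-versus-average inequality $\max_{a \in \calA_1} \E_{a_{2}}[\mat{M}_1[a, a_{2}]] \geq \E_{a_{1} \sim \vx_{1, h}} \E_{a_{2} \sim \vx_{2, h}}[\mat{M}_1[a_{1}, a_{2}]]$.

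There isn't a substantive obstacle: the statement merely unpacks the regret-like structure baked into $u_{i,h}$, and the only care point is selecting the ``right'' pure action in each case. The portion of $\calA_{\kib}$ that contributes zero utility to player $1$ (respectively, player $2$) in the first case can simply be set aside, since it affects the expectation neither positively nor negatively; analogously, the Kibitzer could equivalently use the $(2, a'_{2, h})$ branch to certify non-negativity of its own payoff, which is a reassuring sanity check on the symmetry of the construction.
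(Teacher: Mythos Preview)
Your proof is correct and is precisely the natural unpacking of the definitions that the paper has in mind; the paper itself does not write out a proof, stating only that the lemma ``follows directly from the definition of each utility function $u_{i,h}(\cdot)$.'' Your case analysis with the explicit best-responding pure actions is exactly how one verifies this.
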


Another simple but important observation regarding the representation of $\calT$ is the following bound on the number of nodes of $\calT$, which will be represented as $|\calT|$.

\begin{claim}
    \label{claim:nnodes}
    Let $\calG$ be a two-player $m$-action game. For the induced extensive-form game $\calT = \calT(\calG)$ it holds that $|\calT| \leq 2^{H+1} m^{3 H + 3}$.
\end{claim}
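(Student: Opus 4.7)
The plan is a direct counting argument by levels of the game tree. I will first unpack the simultaneous-move formulation via \Cref{remark:sim}: each round of $\calT$ corresponds to a simultaneous move by the three players which, after conversion to a purely sequential tree with imperfect information, is realized as a depth-$3$ gadget in which player $1$ acts first (branching factor $|\calA_1| = m$), then player $2$ acts (branching factor $m$), and finally the Kibitzer acts (branching factor $|\calA_\kib| = 2m$). Thus a single-round gadget rooted at some pre-round node contributes $1 + m + m^{2}$ internal decision nodes and has $2m^{3}$ outgoing children that serve either as roots of the next round's gadget (for $h < H$) or as terminal nodes (for $h = H$).

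Next I will count how many such gadgets appear at round $h$. Since each history up to the end of round $h-1$ corresponds to a unique joint action sequence of length $h-1$ over the product action set of size $2m^{3}$, the number of round-$h$ gadgets is exactly $(2m^{3})^{h-1}$, and the number of terminal leaves is $(2m^{3})^{H}$. Summing over all rounds yields
\begin{equation*}
|\calT| \;\leq\; (1 + m + m^{2}) \sum_{h=1}^{H} (2m^{3})^{h-1} \;+\; (2m^{3})^{H}.
\end{equation*}

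Finally, I will use the geometric series estimate $\sum_{h=1}^{H}(2m^{3})^{h-1} \leq \frac{(2m^{3})^{H}}{2m^{3} - 1}$ together with the trivial bound $1 + m + m^{2} \leq 3 m^{2}$ (which holds for $m \geq 1$) to get $|\calT| \leq \bigl(\tfrac{3m^{2}}{2m^{3}-1} + 1\bigr) (2m^{3})^{H}$. For $m \geq 2$ the parenthetical factor is at most $2$, giving $|\calT| \leq 2^{H+2} m^{3H}$; absorbing the extra factor $2$ into $m^{3}$ via $2 \leq m^{3}$ (again using $m \geq 2$, as assumed at the start of \Cref{sec:main}) yields $|\calT| \leq 2^{H+1} m^{3H+3}$, as desired. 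There is no real obstacle here: the argument is purely combinatorial bookkeeping, and the only small care needed is to verify the conversion of the simultaneous gadget so that the branching factors are counted correctly and no additional nodes are introduced by the perfect-recall information-set structure (which only affects the partition of nodes, not their number).
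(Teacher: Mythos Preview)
Your argument is correct (modulo a harmless arithmetic slip: $2\cdot(2m^{3})^{H}=2^{H+1}m^{3H}$, not $2^{H+2}m^{3H}$, so the final absorption step is not even needed). The paper's proof is the same level-by-level count but stays in the simultaneous-move representation rather than converting to the sequential one: since each decision node has joint branching factor $|\calA_1|\cdot|\calA_2|\cdot|\calA_\kib|=2m^{3}$, the total node count is simply $\sum_{h=0}^{H}(2m^{3})^{h}\leq(2m^{3})^{H+1}=2^{H+1}m^{3H+3}$. Your route through the sequential gadget is more work but has the mild advantage of confirming that the bound survives the conversion of \Cref{remark:sim}, which the paper only asserts in passing after the proof.
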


\begin{proof}
    It is clear from our construction of the extensive-form game $\calT$ that $|\calT|$ can be expressed as $1 + 2 m^3 + \dots + (2m^3)^{H} \leq 2^{H+1} m^{3 H + 3}$.
\end{proof}

In particular, the description of the extensive-form game $\calT$ is polynomial in the description of $\calG$ when $H$ is an absolute constant. In stark contrast, it is important to point out that the normal-form representation of $\calT$ is exponential even if $H = 2$. Indeed, each player would have to specify an action in each of $1 + 2m^3$ decision nodes, which leads to at least $m^{m^3}$ combinations in the normal-form representation for each player; this is why proving non-trivial hardness results for normal-form games appears to require a different approach. We also remark that the bound $m^{\Theta(H)}$ of \Cref{claim:nnodes} clearly holds after we convert $\calT$ into a sequential-move game, which suffices for our proof to carry over without simultaneous moves.

\subsection{The algorithm}
\label{sec:algo}

Based on the extensive-form game $\calT$ described in \Cref{def:lifted-EFG}, our main reduction is summarized in \Cref{algo:EFG}. Before we proceed, let us make some clarifications. First, the function $\textsc{StateToSeq}(\cdot)$ in \Cref{line:seq} takes as input a state $s_h \in \calS_h$ corresponding to the $h$th repetition, and returns the unique sequence of joint actions $(\va_1, \dots, \va_{h-1})$ that leads to that state; if $h = 1$, we can assume that it returns the empty sequence. Further, the function $\textsc{PrevStates}(\cdot)$ in \Cref{line:prevstates} takes again as input a state $s_h \in \calS_h$ and returns the unique sequence of preceding states $(s_1, \dots, s_{h-1}) \in \calS_1 \times \dots \times \calS_{h-1}$; if $h=1$, this function is again assumed to return the empty sequence. With those semantics in mind, we point out that the condition in \Cref{line:if} is activated if and only if there exists $t \in \range{T}$ such that $\vx^{(t)}_{i, s_\upsilon} [a_{i, \upsilon}] > 0$, for all $\upsilon = 1, 2, \dots h-1$; in the contrary case, the corresponding part of the tree is reached with probability $0$ under the random process of interest (as defined in the proof of \Cref{theorem:inference-EFGs} below), in which case we may set $\tilvq_{i, s_h}$ to an arbitrary distribution over $\range{T}$ (\emph{e.g.}, the uniform as in \Cref{line:arbitrary}).

\begin{algorithm}[!ht]
    \caption{Reduction for \Cref{theorem:inference-EFGs}}
    \label{algo:EFG}
    \textbf{Input}: Two-player $m$-action game $\calG$ in normal form; accuracy $\epsilon > 0$; sparsity $T \in \N$ \\
    \textbf{Output}: A $(9 \epsilon)$-Nash equilibrium of $\calG$ \\
    Construct the three-player extensive-form game $\calT(\calG)$ with $H \geq \frac{\log T}{\epsilon^2}$ (\Cref{def:lifted-EFG}) \\
    Compute a $T$-sparse $\epsilon$-CCE $\frac{1}{T} \sum_{t=1}^T \bigotimes_{i=1}^3 \vx_i^{(t)}$ (\Cref{def:sparseCCE}) in $\calT$ \label{line:CCE} \\
    \For{$h \in \range{H}$}{
    \For{$s_h \in \calS_h$} {
    $(\va_1, \dots, \va_{h-1}) \defeq \textsc{StateToSeq}(s_h)$ \label{line:seq} \\
    $(s_1, \dots, s_{h-1}) \defeq \textsc{PrevStates}(s_h)$ \label{line:prevstates} \\
    \For {$i \in \range{2}$}{
    \If{$ \sum_{t = 1}^T \exp \left( - \sum_{\upsilon=1}^{h-1} \log  \left(\frac{1}{ \vx_{i, s_{\upsilon}}^{(t)}[a_{i, \upsilon}] } \right) \right) > 0$ \label{line:if} }{
    Let
    $$ \tilvq_{i, s_h}^{(t)} \defeq \frac{ \exp \left( - \sum_{\upsilon=1}^{h-1} \log \left( \frac{1}{\vx_{i, s_\upsilon}^{(t)}[a_{i, \upsilon}]} \right) \right) }{ \sum_{t' = 1}^T \exp \left( - \sum_{\upsilon=1}^{h-1} \log  \left(\frac{1}{ \vx_{i, s_{\upsilon}}^{(t')}[a_{i, \upsilon}] } \right) \right) } \quad \forall t \in \range{T} $$ \label{line:EW}\\
    }
    \Else{
    $\tilvq_{i, s_h} \defeq \uni(\range{T}) $\label{line:arbitrary} \\
    }
    $\hatvq_{i, s_h} \defeq \E_{t \sim \tilvq_{i, s_h}} [ \vx^{(t)}_{i, s_h}] \in \Delta(\calA_i)$ \label{line:qhat} \\
    }
    \If{$(\hatvq_{1, s_h}, \hatvq_{2, s_h}) \in \Delta(\calA_1) \times \Delta(\calA_2)$ is a $(9 \epsilon)$-Nash equilibrium of $\calG$}{\textbf{return} $(\hatvq_{1, s_h}, \hatvq_{2, s_h}) \in \Delta(\calA_1) \times \Delta(\calA_2)$} \label{line:return}
    }
    }
    \textbf{return} \textsf{FAIL}
\end{algorithm}

It is evident that as long as $T = \poly(|\calT|)$, all steps in \Cref{algo:EFG} can be implemented in time polynomial in the description of $\calT$, with the exception of \Cref{line:CCE}, which of course depends on the underlying algorithm used to compute a sparse CCE. Along with \Cref{claim:nnodes}, we arrive at the following conclusion.

\begin{proposition}
    \label{prop:runtime}
    Let $\mathfrak{A}$ be an algorithm that takes as input an extensive-form game $\calT$ and computes a $T$-sparse $\epsilon$-CCE of $\calT$ in time at most $Q(|\calT|, T, 1/\epsilon)$. Then, \Cref{algo:EFG} instantiated with $\mathfrak{A}$ in \Cref{line:CCE} runs in time at most $Q(|\calT|, T, 1/\epsilon) +  T  m^{\Theta(H)}$.
\end{proposition}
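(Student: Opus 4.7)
The plan is to walk through \Cref{algo:EFG} line by line and account for the cost of each operation, summing them to recover the bound $Q(|\calT|, T, 1/\epsilon) + T m^{\Theta(H)}$.

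First I would isolate the single expensive primitive: invoking $\mathfrak{A}$ on \Cref{line:CCE}, which by hypothesis costs at most $Q(|\calT|, T, 1/\epsilon)$. The preceding construction of the lifted game $\calT = \calT(\calG)$ from \Cref{def:lifted-EFG} takes time linear in $|\calT|$, which by \Cref{claim:nnodes} is $m^{\Theta(H)}$ and can be absorbed into the final bound.

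Next I would bound the cost of the two nested loops over $h \in \range{H}$ and $s_h \in \calS_h$. The total number of decision nodes visited is at most $|\calT| = m^{\Theta(H)}$. For any fixed state $s_h$, both $\textsc{StateToSeq}(s_h)$ and $\textsc{PrevStates}(s_h)$ can be implemented by walking from $s_h$ up to the root, taking $O(H)$ time each. The softmax weights $\tilvq^{(t)}_{i, s_h}$ for $t \in \range{T}$ in \Cref{line:EW} each involve a sum of at most $H$ log-probabilities, giving $O(TH)$ arithmetic operations per state per player; the uniform fallback of \Cref{line:arbitrary} costs only $O(T)$. Forming $\hatvq_{i, s_h}$ in \Cref{line:qhat} as a convex combination of $T$ distributions over $\calA_i$ requires $O(Tm)$ time. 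Finally, verifying the $9\epsilon$-Nash condition for $(\hatvq_{1, s_h}, \hatvq_{2, s_h})$ reduces to computing, for each of the two players, the gap between their current mixed-strategy utility and a best-response pure deviation in the $m \times m$ bimatrix game $\calG$, which is $O(m^2)$.

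Summing the per-state cost gives $O(TH + Tm + m^2)$, and multiplying by the number of states yields $m^{\Theta(H)} \cdot (TH + Tm + m^2) = T \cdot m^{\Theta(H)}$, since $H \leq m^{O(H)}$ and the additive $m^2$ is absorbed into the exponential factor. Combined with the cost of \Cref{line:CCE}, the total runtime is $Q(|\calT|, T, 1/\epsilon) + T m^{\Theta(H)}$, as claimed. There is no real obstacle in this argument---it is essentially a bookkeeping exercise---and the only point that merits care is that the factor $T$ is kept separate from the exponential, since the sparse CCE contributes $T$ distinct product distributions that must be aggregated at every decision node.
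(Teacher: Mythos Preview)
Your proposal is correct and follows the same approach as the paper, which simply states that it is evident all steps besides \Cref{line:CCE} run in time polynomial in $|\calT|$ and then invokes \Cref{claim:nnodes}. You have just filled in the bookkeeping that the paper leaves implicit.
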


\begin{remark}[Bit complexity of exponential weights]
    \Cref{line:EW} of \Cref{algo:EFG} updates $\tilvq_{i, s_h}^{(t)}$ using exponential weights (in accordance with the aggregation algorithm~\eqref{eq:Vovk}), which could result in $\tilvq_{i, s_h}^{(t)}$ taking irrational values. This can be addressed by simply truncating those values to a sufficiently large polynomial number of bits, in which case the proof of \Cref{theorem:inference-EFGs} readily carries over. For simplicity, we assume in our analysis that $\tilvq_{i, s_h}^{(t)}$ is updated per \Cref{line:EW}, without taking into account the numerical imprecision.
\end{remark}

It is worth commenting here on a couple of differences with~\citep[Algorithm 2]{foster2023hardness}. First, \citet{foster2023hardness} had to encode the joint action profile through the reward, so as to ensure that each player has observed the prior sequence of joint actions. This is not necessary in our setting since, by construction, the states of the extensive-form game $\calT$  encode that information. Further, their algorithm is randomized since---among other steps---they sample a randomized trajectory under a certain random process. To obtain a deterministic algorithm, we instead essentially search over all possible trajectories---all states of the extensive-form game $\calT$---for a Nash equilibrium, which we can afford in our setting.

\subsection{From sparse CCE in $\calT$ to Nash equilibria in $\calG$}
\label{sec:inference}

We are now ready to proceed with the key proof of this section, which establishes the correctness of \Cref{algo:EFG}.

\begin{theorem}
    \label{theorem:inference-EFGs}
    When $H \geq \frac{ \log T}{\epsilon^2}$, \Cref{algo:EFG} returns a $(9 \epsilon)$-Nash equilibrium in the two-player $m$-action game $\calG$.
\end{theorem}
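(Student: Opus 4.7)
The plan is to argue by contradiction: assume \Cref{algo:EFG} returns \textsf{FAIL}, so that for every pair $(h, s_h)$ checked by the loop, $(\hatvq_{1, s_h}, \hatvq_{2, s_h})$ is \emph{not} a $(9\epsilon)$-Nash equilibrium of $\calG$. Then for each such state there exist $i^*(s_h) \in \{1, 2\}$ and $a^*(s_h) \in \calA_{i^*}$ with \emph{estimated} Kibitzer gain
\[
    \widehat{G}(s_h) \;\defeq\; M_{i^*}\!\left[a^*, \hatvq_{-i^*, s_h}\right] - M_{i^*}\!\left[\hatvq_{i^*, s_h}, \hatvq_{-i^*, s_h}\right] \;>\; 9\epsilon,
\]
where $M_{i^*} \in \{\mat{M}_1, \mat{M}_2\}$. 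I would use these witnesses to build a deterministic Kibitzer deviation $\pi^\dagger_\kib: \calS \to \calA_\kib$ with $\pi^\dagger_\kib(s_h) \defeq (i^*(s_h), a^*(s_h))$, and then show that $u_\kib(\pi^\dagger_\kib, \vmu_{-\kib})$ strictly exceeds $u_\kib(\vmu) + \epsilon$, contradicting the Kibitzer's $\epsilon$-CCE constraint for $\vmu = \tfrac{1}{T}\sum_{t=1}^T \bigotimes_{i=1}^3 \vx_i^{(t)}$.

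The canonical random process governing $u_\kib(\pi^\dagger_\kib, \vmu_{-\kib})$ is: sample $t^\star \sim \uni(\range{T})$ and roll out $\calT$ with player $i \in \{1, 2\}$ playing the sequence-form strategy $\vx_i^{(t^\star)}$ and the Kibitzer playing $\pi^\dagger_\kib$. The crucial structural observation is that $\hatvq_{i, s_h}$ produced in \Cref{line:qhat} of \Cref{algo:EFG} is \emph{exactly} Vovk's aggregating prediction (\Cref{eq:Vovk}) instantiated with context $s_h$, outcome $a_{i, h}$, and expert class $\calE = \range{T}$ defined by $\vp^{(t)}(s_h) = \vx_{i, s_h}^{(t)}$. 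Since $a_{i, h} \mid s_h \sim \vx_{i, s_h}^{(t^\star)}$, realizability holds under the random expert $t^\star$, and \Cref{prop:aggregation} (applied separately for $i = 1, 2$) combined with the hypothesis $H \geq \log T/\epsilon^2$ yields
\[
    \frac{1}{H} \sum_{h=1}^H \E\!\left[ \dtv\!\left(\hatvq_{i, s_h},\, \vx_{i, s_h}^{(t^\star)} \right) \right] \;\leq\; \sqrt{\log T / H} \;\leq\; \epsilon, \qquad i \in \{1, 2\}.
\]

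The next step converts these TV bounds into a utility-gap guarantee. A standard triangle-inequality calculation on the bilinear form $M_{i^*}[\cdot, \cdot]$ (using $\|\mat{M}_j\|_\infty \leq 1$, so each $\ell_1$-error converts to $2\dtv$) bounds the per-step gap between $\widehat{G}(s_h)$ and the true Kibitzer gain $M_{i^*}[a^*, \vx_{-i^*, s_h}^{(t^\star)}] - M_{i^*}[\vx_{i^*, s_h}^{(t^\star)}, \vx_{-i^*, s_h}^{(t^\star)}]$ by $2\,\dtv(\hatvq_{i^*, s_h}, \vx_{i^*, s_h}^{(t^\star)}) + 4\,\dtv(\hatvq_{-i^*, s_h}, \vx_{-i^*, s_h}^{(t^\star)})$. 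Summing with the $1/H$ reward normalization and invoking the Vovk bound yields $u_\kib(\pi^\dagger_\kib, \vmu_{-\kib}) > 9\epsilon - \Theta(\epsilon)$, where the lower-order $\Theta(\epsilon)$ reflects how one charges the asymmetric $2/4$ split to the two separate Vovk guarantees. To complete the contradiction, the CCE constraints for players $1$ and $2$, combined with \Cref{lemma:nonnegativity} applied conditionally on $t^\star$ (where the opponents' step-distribution is genuinely a product) together with mild control on the initial steps where $t^\star$ is not yet identifiable from $s_h$ (contributing at most $O(1/H) \leq \epsilon$), imply $u_i(\vmu) \geq -\Theta(\epsilon)$ for $i \in \{1, 2\}$. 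The zero-sum identity $u_1 + u_2 + u_\kib = 0$ of $\calT$ then gives $u_\kib(\vmu) \leq \Theta(\epsilon)$, and the Kibitzer's $\epsilon$-CCE constraint closes the contradiction.

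The main technical hurdle is precisely the tight bookkeeping of these $\Theta(\epsilon)$ slacks so that the threshold genuinely closes at the stated $9\epsilon$ rather than at a marginally larger constant; a coarse estimate $|\widehat{G} - G^{\text{true}}| \leq 4(\dtv_1 + \dtv_2)$ together with the crude $u_\kib(\vmu) \leq 2\epsilon$ only yields a contradiction at a weaker NE approximation. Closing the gap to $9\epsilon$ requires separating the two Vovk guarantees (exploiting the asymmetric $2$-vs-$4$ factors rather than union-bounding them), and possibly a sharper argument on the players-$1,2$ deviation side that uses the Kibitzer's spying structure to tighten the upper bound on $u_\kib(\vmu)$.
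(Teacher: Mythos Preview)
Your overall architecture---Vovk predictions for players $1,2$ under the Kibitzer's deviation process, then contradiction via the Kibitzer's CCE constraint and the zero-sum identity---matches the paper. The real gap is in how you obtain $u_i(\barvmu) \geq -\Theta(\epsilon)$ for $i \in \{1,2\}$. Your sketch appeals to ``\Cref{lemma:nonnegativity} applied conditionally on $t^\star$'' with an $O(1/H)$ correction for ``initial steps where $t^\star$ is not yet identifiable.'' This does not work: a deviating player does not observe $t^\star$, so you cannot simply best-respond to $\vx_{-i}^{(t^\star)}$ step by step, and there is no mechanism by which $t^\star$ becomes exactly identifiable after finitely many steps (the $\vx^{(t)}$ could all be close, or even equal). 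The slack here is not $O(1/H)$ but the Vovk estimation error, and you must construct a genuine deviation that achieves it.

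The paper's resolution is to run the \emph{same} Vovk machinery symmetrically for players $1$ and $2$: each player $i$ forms predictions $\hatvq_{i',h}$ of every opponent $i'\neq i$ (including the Kibitzer) under the process where $i$ deviates, and then plays $\devX_{i,h} = \argmax_{a_i} \E_{\va_{-i}\sim \hatvq_{-i,h}}[u_{i,h}(a_i,\va_{-i})]$. \Cref{lemma:nonnegativity} now applies to the \emph{predicted} product distribution $\hatvq_{-i,h}$, yielding a nonnegative term, and the TV bound from \Cref{prop:aggregation} (summed over the two opponents) gives $u_i(\devX_i, \barvmu_{-i}) \geq -2\sqrt{\log T/H}$, hence $u_i(\barvmu) \geq -2\sqrt{\log T/H} - \epsilon$ from the CCE constraint. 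Once you do this, the constants close cleanly without the asymmetric $2/4$ juggling you worried about: the zero-sum identity gives $u_\kib(\barvmu) \leq 4\sqrt{\log T/H} + 2\epsilon$, and the Kibitzer's own Vovk-based best response yields $\frac{1}{H}\E\sum_h \delta(\hatvq_{1,h},\hatvq_{2,h}) \leq 6\sqrt{\log T/H} + 3\epsilon \leq 9\epsilon$. Since $\delta \geq 0$ pointwise, some visited state has $\delta \leq 9\epsilon$, which is exactly the $(9\epsilon)$-NE condition. In particular, the paper frames this as a direct average-then-existence argument rather than a contradiction, which is what makes the bookkeeping clean.
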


\begin{proof}
    We consider a sequence of joint strategies $(\vx^{(1)}_1, \vx^{(1)}_2, \vx^{(1)}_3), \dots, (\vx^{(T)}_1, \vx^{(T)}_2, \vx^{(T)}_3) \in \bigtimes_{i=1}^3 \calX_i$ in the extensive-form game $\calT$ with the property that $\barvmu \defeq \frac{1}{T} \sum_{t=1}^T \bigotimes_{i=1}^3 \vx_i^{(t)}$ is an $\epsilon$-CCE, and by construction $T$-sparse per \Cref{def:sparseCCE}.
    
    Let us fix a player $i \in \range{3}$. For each state $s \in \calS$, we define $\tilvq_{i, s} \in \Delta(\range{T})$ per \Cref{line:EW}; $\hatvq_{i, s} \in \Delta(\calA_i)$ per \Cref{line:qhat}; and the deviation strategy $\devX_{i} \in \calX_i$ so that for each state $s \in \calS$ it holds that $\devX_{i, s} \defeq \argmax_{a_{i, s} \in \calA_i} \E_{\va_{-i, s} \sim \hatvq_{-i, s}} [u_{i, h}(a_{i, s}, \va_{-i, s})]$. We will now make use of \Cref{prop:aggregation} regarding the aggregation algorithm~\eqref{eq:Vovk} under a certain random process to be described shortly. In particular, under a different player $i' \neq i$, to relate our problem with the setup of online density estimation introduced earlier, we  make the following correspondence:
    \begin{itemize}
        \item the context space $\calO$ corresponds to the set of all possible states or decision nodes $\calS$ of the extensive-form game $\calT$;
        \item the set of experts $\calE$ coincides with the set $ \{ \vx_{i', s}^{(1)}, \dots, \vx_{i', s}^{(T)} \} $, with outcome space $\calA_{i'}$; and
        \item the time index $h \in \range{H}$ in the context of online density estimation will now (fittingly) correspond to the repetition $h \in \range{H}$.
    \end{itemize}
    We note that, by construction of the extensive-form game $\calT$, player $i$ observes the underlying state $s_h$ at each repetition $h \in \range{H}$, which fully specifies the sequence of joint actions leading up to that state. As a result, under a given random sequence of states $(s_1, \dots, s_H) \in \calS^H$, we can apply the aggregation algorithm~\eqref{eq:Vovk} with the aforementioned parameterization to obtain an estimate $\hatvq_{i', s_h} \in \Delta(\calA_{i'})$ for all repetitions $h \in \range{H}$ and $i' \neq i$.
    Below, we overload the notation by letting $\hatvq_{i', h} \defeq \hatvq_{i', s_h}$ and $\tilvq_{i', h} \defeq \tilvq_{i', s_h}$ so as to be consistent with the notation of \Cref{sec:ode}.
    Namely, we have that 
    \begin{equation*}
        \hatvq_{i', h} \defeq \E_{t \sim \tilvq_{i', h}}[\vec{x}_{i', s_h}^{(t)}], \textrm{ where~ } \tilvq_{i', h}^{(t)} \defeq \frac{ \exp \left( - \sum_{\upsilon=1}^{h-1} \elllog_{\upsilon}(\vx_{i', s_\upsilon}^{(t)}) \right) }{ \sum_{t' = 1}^T \exp \left( - \sum_{\upsilon=1}^{h-1} \elllog_{\upsilon}(\vx_{i', s_\upsilon}^{(t')}) \right) } \quad \forall t \in \range{T}.
    \end{equation*}
    Given that the sequence of states $(s_1, \dots, s_H)$ is produced by a certain random process (described next), $\hatvq_{i', h}$ and $\tilvq_{i', h}$ are random variables. We also recall that $\elllog_\upsilon( \vx^{(t)}_{i', s_\upsilon}) = \log \frac{1}{ \vx^{(t)}_{i', s_\upsilon}[a_{i', \upsilon}] }$. Accordingly, the deviation $\devX_{i, h} \in \Delta(\calA_i) $ for player $i \in \range{3}$ is defined as follows:
       \begin{equation}
            \label{eq:devq}
           \devX_{i, h} \defeq \argmax_{a_{i, h} \in \calA_i} \E_{\vec{a}_{-i, h} \sim  \hatvq_{-i, h}} [u_{i, h}(a_{i, h}, \va_{-i, h})].
       \end{equation}
    We next argue about the deviation benefit of player $i$ under the deviation strategy described above. In particular, we are interested in the payoff player $i$ obtains under the random process wherein we first draw an index $t^\star$ uniformly at random from the set $\range{T}$, player $i$ plays according to the deviation strategy $\devX_i$, while the rest of the players play according to $\vx^{(t^\star)}_{-i}$. By definition of this random process, realizability is satisfied: the observed distribution of outcomes obeys the law induced by $\vx^{(t^\star)}_{i'} : \calS \to \Delta(\calA_{i'})$, conditioned on the time index $t^\star$. As a result, \Cref{prop:aggregation} implies that for each player $i \in \range{3}$ and $i' \neq i$ it holds that
    \begin{equation}
        \label{eq:TV-close}
        \E_{ \devX_i \times \vx_{-i}^{(t^\star)}} \left[ \sum_{h = 1}^H \dtv \left( \hatvq_{i', h}, \vx^{(t^\star)}_{i', s_h} \right) \right] \leq \sqrt{H \log T},
    \end{equation}
    where the expectation and the sequence of states $(s_1, \dots, s_H) \in \calS^H$ is taken with respect to the random process described above. As a result, we have that
    \begin{align}
        u_i(\devX_i, \barvmu_{-i}) &= \E_{t^\star \sim \uni(\range{T})} [ u_i(\devX_i, \vx_{-i}^{(t^\star)}) ] = \E_{t^\star \sim \uni(\range{T})} \E_{ \devX_i, \vx_{-i}^{(t^\star)}} \sum_{h=1}^H \E_{\va_{-i, h} \sim \vx^{(t^\star)}_{-i, s_h}} [u_{i, h}( \devX_{i, h}, \va_{-i, h} )] \notag \\
        &\geq - 2 \sqrt{\frac{\log T}{H}} + \E_{t^\star \sim \uni(\range{T})} \E_{ \devX_i, \vx_{-i}^{(t^\star)}} \sum_{h=1}^H \E_{\va_{-i, h} \sim \hatvq_{-i, h}  } [u_{i, h}( \devX_{i, h}, \va_{-i, h} )] \label{align:TV-sum} \\
        &\geq - 2 \sqrt{\frac{\log T}{H}} + \E_{t^\star \sim \uni(\range{T}) } \E_{ \devX_i, \vx_{-i}^{(t^\star)}} \sum_{h=1}^H \max_{a_{i, h} \in \calA_i} \E_{\va_{-i, h} \sim \hatvq_{-i, h}} [u_{i, h}( a_{i, h}, \va_{-i, h} )] \label{align:opt-dev} \\
        &\geq- 2 \sqrt{\frac{\log T}{H}},\label{align:product-nonnegative}
    \end{align}
    where \eqref{align:TV-sum} uses~\eqref{eq:TV-close} along with the fact that 
    \begin{align*}
        \E_{\va_{-i, h} \sim \vx^{(t^\star)}_{-i, s_h}} [u_{i, h}( \devX_{i, h}, \va_{-i, h} )] &\geq \E_{\va_{-i, h} \sim \hatvq_{-i, h}  } [u_{i, h}( \devX_{i, h}, \va_{-i, h} )] - \frac{1}{H} \dtv\left( \vx^{(t^\star)}_{-i, s_h}, \bigtimes_{i' \neq i} \hatvq_{i', h}  \right) \\
         &\geq \E_{\va_{-i, h} \sim \hatvq_{-i, h}  } [u_{i, h}( \devX_{i, h}, \va_{-i, h} )] - \frac{1}{H} \sum_{i' \neq i} \dtv\left( \vx^{(t^\star)}_{i', s_h},  \hatvq_{i', h}  \right), 
    \end{align*}
    since $|u_{i,h}(\cdot, \cdot)| \leq \frac{1}{H}$ (by construction) and the total variation distance between two product distributions is bounded by the sum of the total variation of the individual components~\citep{Hoeffding58:Distinguishability}; \eqref{align:opt-dev} follows from the definition of $\devX_{i, h}$ in~\eqref{eq:devq}, which in particular implies that 
    $$\E_{\va_{-i, h} \sim \hatvq_{-i, h}  } [u_{i, h}( \devX_{i, h}, \va_{-i, h} )] = \max_{a_{i, h} \in \calA_i} \E_{\va_{-i, h} \sim \hatvq_{-i, h}} [u_{i, h}( a_{i, h}, \va_{-i, h} )];$$ and~\eqref{align:product-nonnegative} follows from the fact that $\max_{a_{i, h} \in \calA_i} \E_{\va_{-i, h} \sim \hatvq_{-i, h}} [u_{i, h}( a_{i, h}, \va_{-i, h} )] \geq 0$ (\Cref{lemma:nonnegativity}). Further, since $\barvmu$ is assumed to be an $\epsilon$-CCE, \eqref{align:product-nonnegative} implies that for each player $i \in \range{3}$,
    \begin{equation}
        \label{eq:aux}
        u_i(\barvmu) \geq - 2 \sqrt{\frac{\log T}{H}} - \epsilon. 
    \end{equation}
    
    Given that $\calT$ is zero-sum, we also have that $\sum_{i=1}^3 u_i(\barvmu) = 0$; by~\eqref{eq:aux}, this in turn implies that $u_\kib(\barvmu) = - u_1(\barvmu) - u_2(\barvmu) \leq 4 \sqrt{\frac{\log T}{H}} + 2 \epsilon$. We next focus on analyzing the deviation benefit of the Kibitzer. We let
    \begin{equation*}
         \delta(\hatvq_{1,h}, \hatvq_{2,h}) \defeq \max \left\{ \max_{a'_{1, h} \in \calA_{1}}  \left\{\mat{M}_{1}[a_{1, h}', \hatvq_{2, h}] - \mat{M}_{1}[\hatvq_{-\kib, h}]\right\}, \max_{a'_{2, h} \in \calA_{2}} \left\{\mat{M}_{2}[\hatvq_{1, h}, a'_{2, h}] - \mat{M}_{2}[\hatvq_{-\kib, h}]\right\}  \right\}.
    \end{equation*}
    By~\eqref{align:opt-dev}, we have that
    \begin{align*}
         u_\kib(\devX_\kib, \barvmu_{-\kib}) \geq \frac{1}{H} \E_{t^\star \sim \uni(\range{T})} \E_{ \devX_\kib, \vx_{-\kib}^{(t^\star)}} \sum_{h=1}^H \delta(\hatvq_{1,h}, \hatvq_{2,h}) - 2 \sqrt{\frac{\log T}{H}}.
    \end{align*}
    Since $\barvmu$ is an $\epsilon$-CCE, we also know that $u_\kib(\devX_\kib, \barvmu_{-\kib}) \leq u_\kib(\barvmu) + \epsilon \leq 3 \epsilon + 4 \sqrt{\frac{\log T}{H}}$, which in turn implies that
    \begin{equation*}
        \E_{t^\star \sim \uni(\range{T})} \E_{ \devX_\kib, \vx_{-\kib}^{(t^\star)}} \frac{1}{H} \sum_{h=1}^H \delta(\hatvq_{1,h}, \hatvq_{2,h}) \leq 9 \epsilon,
    \end{equation*}
    where we used that $H \geq \frac{\log T}{\epsilon^2}$. Finally, given that $\delta(\hatvq_{1,h}, \hatvq_{2,h}) \geq 0$ (\Cref{lemma:nonnegativity}), we conclude that there exists some repetition $h \in \range{H}$ and a state $s \in \calS_h$ such that $\delta(\hatvq_{1,s_h}, \hatvq_{2,s_h}) \leq 9 \epsilon$. That is, $(\hatvq_{1, s_h}, \hatvq_{2, s_h})$ is a $(9\epsilon)$-Nash equilibrium, concluding the proof.
\end{proof}

\begin{remark}
    \label{remark:non-uniform}
    It is direct to see that the proof of \Cref{theorem:inference-EFGs} can be extended under a more general notion of sparse CCE, wherein $\barvmu$ is not necessarily a uniform mixture of product distributions; this notion of CCE is naturally associated with a weighted generalization of regret. This observation is important since in practice taking a non-uniform average can lead to significant gains in performance~\citep{Brown19:Solving}.
\end{remark}

\subsection{Implications}
\label{sec:implic}

Having established \Cref{theorem:inference-EFGs}, we are now ready to prove that computing approximate CCE in a certain regime of sparsity is hard, at least under some well-established complexity assumptions. In particular, we will leverage the hardness result of~\citet{Rubinstein18:Inapproximability} (\Cref{theorem:rubin}), which rests on the so-called \emph{exponential-time hypothesis (ETH)} for $\PPAD$~\citep{Babichenko16:Can}. That hypothesis pertains the complexity of solving $\textsc{EndOfALine}$, the prototypical $\PPAD$-complete problem~\citep{Papadimitriou94:On}. 

\begin{conjecture}[\citep{Babichenko16:Can}]
    \label{conj:ETH}
    Solving $\textsc{EndOfALine}$ on $m$-bit circuits with $\Tilde{O}(m)$ gates requires time $2^{\Tilde{\Omega}(m)}$.
\end{conjecture}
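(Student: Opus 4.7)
The statement is presented as a widely-believed conjecture from~\citet{Babichenko16:Can}, for which no unconditional proof is known; any unconditional proof would in particular yield $\P \neq \NP$ (and separate $\PPAD$ from polynomial-time computability), so the realistic plan cannot be to prove \Cref{conj:ETH} ``from scratch''. Instead, my strategy is to \emph{derive} it from a strictly stronger, well-studied assumption---most naturally a cryptographic one---by transferring subexponential security bounds through a $\PPAD$-completeness reduction.

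The concrete approach I would follow is as follows. First, take as starting point the line of work that grounds the average-case hardness of $\textsc{EndOfALine}$ in subexponentially secure indistinguishability obfuscation (iO) together with one-way functions, via the Sink-of-Verifiable-Line construction. Second, instantiate that reduction with security parameter $\lambda = \Theta(m)$ and trace the blow-up carefully: the output is an $\textsc{EndOfALine}$ instance whose implicit graph is presented by a circuit of some size $s(\lambda)$ with $m'(\lambda)$-bit node labels. Third, assuming the underlying iO scheme is secure against $2^{\tilde{\Omega}(\lambda)}$-time adversaries, any algorithm solving the resulting $\textsc{EndOfALine}$ instance in time $2^{\tilde{o}(m)}$ would translate, via the same reduction, into a subexponential attack on iO, yielding the contradiction that establishes \Cref{conj:ETH}.

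The principal obstacle is the calibration of the reduction's overhead. All known iO-based reductions to $\textsc{EndOfALine}$ introduce $\poly(\lambda)$ blow-up due to puncturing, padding, and standard hybrid arguments, so with $\lambda = \Theta(m)$ the produced circuits have size $m \cdot \poly(m)$ rather than the targeted $\tilde{O}(m)$ gates; naively this only yields hardness of the form $2^{\Omega(m^{\delta})}$ for some $\delta < 1$. Bridging this gap to the tight $2^{\tilde{\Omega}(m)}$ bound would require either an essentially linear-overhead reduction---perhaps exploiting compact obfuscation for restricted circuit classes---or a near-optimally subexponentially secure primitive whose security margin swallows the $\poly(\lambda)$ overhead. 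I would expect this parameter calibration, rather than the combinatorial structure of the reduction itself, to be the genuine difficulty, and it is the reason that in practice the community treats \Cref{conj:ETH} as an axiomatic hardness assumption rather than a derived consequence of iO.
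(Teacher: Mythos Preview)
The paper does not prove this statement at all: it is explicitly labeled a \emph{conjecture} (\Cref{conj:ETH}), attributed to~\citet{Babichenko16:Can}, and is used throughout purely as a hypothesis under which the hardness results (\Cref{theorem:sparseCCE}, \Cref{cor:noregret-efgs}, \Cref{cor:MWU}) are derived. There is therefore no ``paper's own proof'' to compare against.

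You correctly recognize that no unconditional proof is possible with current techniques. Your proposal to derive the conjecture from subexponentially secure indistinguishability obfuscation is a reasonable research direction, but as you yourself note, the known iO-to-$\textsc{EndOfALine}$ reductions incur $\poly(\lambda)$ overhead, which downgrades the conclusion to $2^{\Omega(m^{\delta})}$ for some $\delta < 1$ rather than the $2^{\tilde{\Omega}(m)}$ stated in \Cref{conj:ETH}. So even taken on its own terms your proposal does not establish the conjecture as written---it would establish a weaker subexponential bound, which is a different (and strictly weaker) assumption than the one the paper invokes. In short: there is no gap in your reasoning about why the approach falls short, but the upshot is that you have not provided a proof, conditional or otherwise, of the exact statement; and the paper never intended one.
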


Assuming that this conjecture holds, \citet{Rubinstein18:Inapproximability} proved that the quasipolynomial algorithm of~\citet{Lipton03:Playing} is essentially optimal.

\begin{theorem}[\citep{Rubinstein18:Inapproximability}]
    \label{theorem:rubin}
    Assuming \Cref{conj:ETH}, there is an absolute constant $\epsilon_0 > 0$ such that finding an $\epsilon_0$-Nash equilibrium in two-player $m$-action games requires time $m^{\log_2^{1 - o(1)}m}$.
\end{theorem}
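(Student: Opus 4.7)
The plan is to prove \Cref{theorem:rubin} via a chain of reductions from $\textsc{EndOfALine}$, starting from \Cref{conj:ETH}. The overall strategy follows the blueprint for proving $\PPAD$-hardness of Nash equilibria, but enhanced with a PCP-style amplification so as to achieve a \emph{constant} inapproximability threshold $\epsilon_0$, rather than the inverse-polynomial factor that falls out of vanilla reductions.

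First, I would reduce an $n$-bit $\textsc{EndOfALine}$ instance with $\tilde{O}(n)$ gates to computing an approximate Brouwer fixed point of a continuous function $f$ on a high-dimensional hypercube $[0,1]^d$, following the Daskalakis--Goldberg--Papadimitriou machinery. The key technical step is then a PCP-style amplification for the Brouwer problem: the underlying $\textsc{EndOfALine}$ solution is encoded using an error-correcting code (ideally a locally-testable code with constant rate and distance), and the displacement field $f(\vec{x}) - \vec{x}$ is engineered so that any $\epsilon_0$-approximate fixed point must ``robustly'' decode to a genuine solution. Concretely, local inconsistencies at individual coordinates should propagate through the code structure and manifest as a large global displacement, which forces every approximate solution to decode correctly.

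Next, I would reduce this constant-inapproximability Brouwer instance to an approximate Nash equilibrium in a graphical game on a bounded-degree graph, via arithmetic gadgets (addition, comparison, multiplication by constants) whose best-response structure implements individual coordinates of $f$. Finally, applying the classical Daskalakis--Goldberg--Papadimitriou ``lawyer'' compression from graphical to two-player games produces an $m$-action bimatrix game. The reduction parameters must be calibrated so that $m = 2^{\Theta(n / \polylog n)}$; combined with \Cref{conj:ETH}'s $2^{\tilde{\Omega}(n)}$ lower bound, this yields the claimed $m^{\log_2^{1-o(1)} m}$ bound, since $\log_2 m \approx n/\polylog n$ turns the exponent $\tilde{\Omega}(n)$ into $\tilde{\Omega}(\log m \cdot \polylog m) = \log_2^{1-o(1)} m \cdot \log_2 m$.

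The main obstacle is the PCP-style amplification for Brouwer. Unlike classical $\NP$-PCPs, there is no ready-made proof-verification paradigm; instead one needs a purely geometric construction, a Lipschitz displacement field whose approximate zeros are tightly constrained by algebraic redundancy in the code. Orchestrating the code parameters (rate, distance, locality), the Lipschitz constant of $f$, and the final dimension $d$ so that the downstream two-player game has exactly the right action count is the delicate part; any slack in the parameter trade-offs immediately weakens the final exponent below $\log_2^{1-o(1)} m$.
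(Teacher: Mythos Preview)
The paper does not prove \Cref{theorem:rubin}; it is quoted from \citet{Rubinstein18:Inapproximability} and invoked as a black box in the proof of \Cref{theorem:sparseCCE}. There is no in-paper argument to compare your sketch against.

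Evaluated on its own, your outline has a concrete parameter error that breaks the conclusion. You set $m = 2^{\Theta(n/\polylog n)}$, so $\log_2 m \approx n/\polylog n$ and hence $n \approx (\log_2 m)\cdot \polylog n$. But $\polylog n$ here is $\poly(\log n) \approx \poly(\log\log m)$, \emph{not} $\polylog m$ as you write; consequently $n = (\log_2 m)^{1+o(1)}$, not $(\log_2 m)^{2-o(1)}$. Under ETH the lower bound $2^{\tilde\Omega(n)}$ then translates only to $2^{(\log_2 m)^{1+o(1)}} = m^{(\log_2 m)^{o(1)}}$, a barely-superpolynomial bound and far short of $m^{\log_2^{1-o(1)} m}$. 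To hit the target you need $n \approx (\log_2 m)^2$, i.e.\ the reduction must output a game with only $m = 2^{n^{1/2+o(1)}}$ actions from an $n$-bit $\textsc{EndOfALine}$ instance. Achieving that action count while keeping $\epsilon_0$ an absolute constant is precisely the crux of Rubinstein's construction; the standard DGP ``lawyer'' compression you invoke does not deliver these parameters. The error-correcting-code Brouwer instance you allude to is indeed a central ingredient, but the quantitative difficulty lies in the subsequent embedding into a bimatrix game with $2^{\tilde O(\sqrt{n})}$ actions, which your plan does not address.
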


We now use \Cref{theorem:inference-EFGs} to prove the following hardness result. Below, we recall that we use the notation $|\calT|$ to represent the number of nodes in the extensive-form game $\calT$.

\sparsecce*

\begin{proof}
    Suppose that algorithm $\mathfrak{A}$ implementing \Cref{line:CCE} of \Cref{algo:EFG} runs in time polynomial in the description of $\calT$ and computes a $T$-sparse $(\epsilon_0/9)$-CCE of $\calT$, where $T = 2^{\log_2^{\gamma} |\calT|}$ and $\gamma < \frac{1}{2}$. Then, by \Cref{prop:runtime}, \Cref{algo:EFG} runs in time $m^{\Theta(H)}$. As a result, it follows from \Cref{theorem:inference-EFGs,theorem:rubin} that for $H \geq \frac{81 \log T}{\epsilon_0^2}$, it must hold that $m^{\Theta(H)} \geq m^{\log_2^{1 - o(1)}m}$. Further, \Cref{claim:nnodes} implies that $T = 2^{\log_2^{\gamma} |\calT| } \leq 2^{4 H^\gamma \log_2^\gamma m }$. As a result, for a sufficiently large $H = O(\log_2^{\frac{\gamma}{1 - \gamma}} m)$ it follows that $H \geq \frac{81 \log T}{\epsilon_0^2}$, which in turn implies that $H = \Omega( \log_2^{1 - o(1)} m)$. As a result, we conclude that $\gamma \geq \frac{1}{2} - o(1)$, leading to the desired conclusion.
\end{proof}

A number of remarks regarding \Cref{theorem:sparseCCE} are in order. First, \Cref{theorem:sparseCCE} establishes a stark separation between extensive- and normal-form games. Indeed, as we have seen, in normal-form games there are polynomial-time algorithms, such as multiplicative weights update, that can compute an $O(\log m)$-sparse $O(1)$-CCE in polynomial time. In contrast, \Cref{theorem:sparseCCE} precludes even computing a $\polylog |\calT|$-sparse $O(1)$-CCE, unless the quasipolynomial-time algorithm of~\citet{Lipton03:Playing} can be improved. It is also worth pointing out here that it is natural to expect that analogous lower bounds to \Cref{theorem:sparseCCE} could be established under the more plausible conjecture $\P \neq \PPAD$ (instead of \Cref{conj:ETH}). Yet, that seems to require a very different approach. Indeed, for the $\PPAD$-hardness of $\epsilon$-Nash equilibria in two-player games to kick in, one must take $\epsilon$ to be inversely polynomial to $m$~\citep{Chen09:Settling}. In that case, the description of $\calT$ becomes immediately $m^{\Omega(\poly(m))}$, which renders reductions analogous to \Cref{theorem:inference-EFGs} of little use. It is thus crucial for our approach to take $\epsilon > 0$ to be an (absolute) constant. Finally, we point out that \Cref{theorem:sparseCCE} still applies by taking $T = 2^{C \log_2^{1/2 - o(1) } |\calT|}$, for any absolute constant $C > 0$.

To better contextualize \Cref{theorem:sparseCCE}, we remark that there are algorithms running in time polynomial in $\calT$ that can compute an $O(k)$-sparse $O(1)$-CCE in every extensive-form game $\calT$  with $k \leq \max_{1 \leq i \leq n} \log \left( \prod_{j \in \calJ_i} |\calA_j| \right) = \max_{1 \leq i \leq n} \sum_{j \in \calJ_i}\log |\calA_j|$; that is, $k$ is at most nearly-linear in $|\calT|$, and it can be much smaller depending on the information structure of $\calT$. This is a direct consequence of the fact that algorithms such as multiplicative weights update can be implemented in polynomial time in extensive-form games~\citep{Farina22:Kernelized}, thereby implying that the regret of each player $i$ will be bounded as $O(\sqrt{ T \log m_i})$, where $m_i$ represents the number of actions in the induced normal-form game: $m_i \defeq \prod_{j \in \calJ_i} |\calA_j|$. As a result, we see that there is a gap between our lower bound (\Cref{theorem:sparseCCE}) and the aforementioned best-known upper bound, which essentially amounts to improving the exponent $\frac{1}{2}$ in the term $2^{\log_2^{1/2 - o(1)} |\calT|} $ all the way up to $1$.

\paragraph{Lower bounds for no-regret learning in extensive-form games} Relatedly, we next proceed by pointing out some important implications of \Cref{theorem:sparseCCE} for bounding the regret incurred by no-regret learning algorithms in extensive-form games. In particular, a number no-regret learning algorithms have been designed with iteration complexity polynomial in the description of the extensive-form game (\Cref{sec:related}). For example, one such broad class derives from the paradigm of \emph{follow the perturbed leader} ($\ftpl$)~\citep{Abernethy16:Perturbation,Kalai05:Efficient,Hazan16:Introduction}; indeed, $\ftpl$ can be implemented efficiently under a linear optimization oracle, which can be in turn implemented in polynomial time in extensive-form games. \Cref{theorem:sparseCCE} circumscribes the performance of any of those algorithms---and combinations thereof---when employed simultaneously by all players.

\efgsnoregret*

In words, obtaining an average regret below an absolute constant requires at least $2^{\log_2^{1/2 - o(1)} |\calT|}$ repetitions, which again stands in stark contrast to the performance of learning algorithms in normal-form games. \Cref{cor:noregret-efgs} is an immediate consequence of \Cref{theorem:sparseCCE}, along with the folklore fact that the average product distribution after $T$ repetitions---by definition $T$-sparse---constitutes a $\frac{1}{T} \max_{1 \leq i \leq n} \reg_i^T$-CCE (\Cref{prop:folklore}). We clarify that each regret $\reg_i^T$ can be indeed computed in time $\poly(|\calT|, T)$ since it amounts to computing a best response, in turn implying that one can determine for each repetition whether $\frac{1}{t} \max_{1 \leq i \leq n} \reg_i^t \leq \epsilon$.

A noteworthy feature of \Cref{cor:noregret-efgs} is that it applies even if players are following different no-regret algorithms, and the updates are not simultaneous. It is especially worth stressing that last point. One popular way of improving the performance of no-regret algorithms in games consists of \emph{alternation}~\citep{Tammelin15:Solving,Wibisono22:Alternating}, whereby players are updating their strategies in an alternating fashion. Alternation is, of course, not a legitimate choice within the framework of online learning as it trivializes the problem for the player who gets to play last; for example, that player could always just best respond, which would accumulate at most $0$ regret for that player. Nevertheless, \Cref{cor:noregret-efgs} still holds even under learning dynamics that are beyond the framework of online learning. Furthermore, unlike the paper of~\citet{Daskalakis15:Near}, \Cref{cor:noregret-efgs} does not limit the amount of memory players use, as long as the running time stays polynomial. 

\paragraph{Lower bounds for (optimistic) MWU} Beyond extensive-form games, our results also turn out to have implications for the performance of certain no-regret learning algorithms in normal-form games. In particular, one can always cast an extensive-form game in normal form, and then use a no-regret learning algorithm on the induced normal-form game. In general, such an approach is not interesting algorithmically since the iteration complexity would be exponential, thereby rendering computational lower bounds of little use. However, it turns out there are certain algorithms for which each iteration can be indeed implemented in polynomial time, even though they operate over the---typically exponentially large---normal-form representation. This can be accomplished by leveraging the underlying structure of the extensive-form representation, and it is akin to the kernel trick~\citep{Takimoto03:Path,Beaglehole23:Sampling}. Perhaps most notably, such is the case for the celebrated multiplicative weights updates ($\mwu$), as well as its \emph{optimistic} counterpart ($\omwu$)~\citep{Farina22:Kernelized} (see also~\citep{Bai22:Efficient,Chhablani23:Multiplicative,Peng23:Fast}).

In particular, let us recall ($\mathtt{O}$)$\mwu$ in the context of learning in games. In the vanilla $\mwu$ algorithm each player $i \in \range{n}$ updates its strategy for $t = 1, 2, \dots$ as follows.

\begin{equation}
    \tag{MWU}
    \vx^{(t+1)}_i[a_i] = \frac{ \vx_i^{(t)}[a_i] \exp\left( \eta_i \vu_i^{(t)}[a_i] \right) }{\sum_{a_i' \in \calA_i} \vx_i^{(t)}[a_i'] \exp \left( \eta_i \vu_i^{(t)}[a_i'] \right) }, \quad \forall a_i \in \calA_i.
\end{equation}
Here, $\vu_i^{(t)}[a_i] \defeq \E_{\va_{-i} \sim  \vx_{-i}^{(t)} } [u_i(a_i, \va_{-i}) ]$; $\vx_i^{(1)} \defeq ( \sfrac{1}{|\calA_i|}, \dots, \sfrac{1}{|\calA_i|})$; and $\eta_i > 0$ is the learning rate. Beyond the uniform distribution, one can also initialize $\mwu$ to any point in the (relative) interior of the simplex.\footnote{We should note that the analysis of~\citet{Farina22:Kernelized} pertaining the iteration complexity of $\mwu$ in extensive-form games considers the uniform initialization, but can be directly generalized.} Similarly, $\omwu$~\citep{Rakhlin13:Online} is defined via the following update rule.
\begin{equation}
    \tag{OMWU}
    \vx^{(t+1)}_i[a_i] = \frac{ \vx_i^{(t)}[a_i] \exp\left( \eta_i (2 \vu_i^{(t)}[a_i] - \vu_i^{(t-1)}[a_i] ) \right) }{\sum_{a_i' \in \calA_i} \vx_i^{(t)}[a_i'] \exp \left( \eta_i (2 \vu_i^{(t)}[a_i'] - \vu_i^{(t-1)}[a_i']) \right) }, \quad \forall a_i \in \calA_i.
\end{equation}

$\omwu$ can be seen as an variant of $\mwu$ in which a prediction term is incorporated into the update rule. In the definitions above, the player's strategies could take irrational values due to the exponential function, but this can be readily addressed by truncating to a sufficiently large number of bits, an operation that does not essentially alter any of the results.

In general, it is evident that ($\mathtt{O}$)$\mwu$ requires time $\Omega(|\calA_i|)$ in each iteration, which is typically exponential in the context of extensive-form games. However, it turns out that each iteration of ($\mathtt{O}$)$\mwu$ can be implicitly performed in time $\poly(|\calT|)$~\citep{Farina22:Kernelized}; the analysis of~\citet{Farina22:Kernelized} does not account for the numerical imprecision resulting from the exponential function, but their argument readily carries over by truncating to a sufficiently large number of bits. This leads to the following lower bound for the regret accumulated by such algorithms. Below, we tacitly assume that each learning rate is given by an efficiently computable function.

\begin{corollary}
    \label{cor:MWU}
    Consider the class of $n$-player normal-form games where each player has $m'$ actions. If each player $i \in \range{n}$ follows $\mwu$ or $\omwu$ with any learning rate $\eta_i = \eta_i(n, \log m', T)$ and incurs regret $ \reg_i^T$ after $T$ repetitions, there is a game $\calG$ and an absolute constant $\epsilon > 0$ such that at least $T \geq 2^{(\log_2 \log_2 m')^{1/2 - o(1)}}$ repetitions of the game are needed so that $\frac{1}{T} \max_{1 \leq i \leq n} \reg_i^T \leq \epsilon$, unless ETH for $\PPAD$ (\Cref{conj:ETH}) fails.
\end{corollary}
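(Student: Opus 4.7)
The plan is to deduce \Cref{cor:MWU} from \Cref{cor:noregret-efgs} by establishing that each iteration of $\mwu$ or $\omwu$ on the normal-form representation of an extensive-form game $\calT$ can be implemented in time $\poly(|\calT|)$. For this, I would invoke the kernelized-simulation result of~\citet{Farina22:Kernelized}: both updates can be performed implicitly via tree-structured dynamic programs over $\calT$, bypassing the potentially exponential action space of the induced normal form. The argument is essentially insensitive to the learning rate $\eta_i$, since $\eta_i$ enters only as a scalar inside the exponent of each coordinate; consequently the simulation remains polynomial for any efficiently computable $\eta_i = \eta_i(n, \log m', T)$. The irrational values produced by the exponential can be truncated to a polynomial number of bits, as noted after the update rules, which neither breaks the per-iteration polynomial guarantee nor alters the incurred regret by more than a negligible term.

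Once this implicit polynomial simulation is in hand, \Cref{cor:noregret-efgs} implies that, under \Cref{conj:ETH}, any such dynamics requires $T \geq 2^{\log_2^{1/2 - o(1)}|\calT|}$ repetitions before $\frac{1}{T}\max_i \reg_i^T$ drops below some absolute constant $\epsilon > 0$. It remains to translate this bound into one in terms of the action count $m'$ of the induced normal-form game. Letting $m$ denote the number of actions in the underlying two-player game fed into \Cref{def:lifted-EFG}, \Cref{claim:nnodes} gives $\log_2 |\calT| = \Theta(H \log_2 m)$. On the other hand, each of the at most $|\calT|$ information sets has branching bounded by $2m^3$, so the induced normal form has $m' \leq (2m^3)^{|\calT|}$ actions per player, yielding $\log_2 m' = O(|\calT| \log_2 m)$. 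Combined with $|\calT| \geq m^{\Omega(H)}$ from the same claim, $m$ is polynomial in $|\calT|$, and therefore $\log_2 \log_2 m' = \Theta(\log_2 |\calT|)$. Substituting into the extensive-form lower bound yields $T \geq 2^{(\log_2 \log_2 m')^{1/2 - o(1)}}$, matching the statement for three-player (and hence $n \geq 3$-player) games as in \Cref{cor:noregret-efgs}.

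The main obstacle I anticipate is confirming that the kernelized simulation of~\citet{Farina22:Kernelized} carries over for arbitrary learning rates and in the presence of bit-precision truncation of the exponential weights, since their analysis was phrased for a specific parameterization and did not account for numerical imprecision; both extensions are routine but need to be articulated. A secondary point is to verify that initializing $(\mathtt{O})\mwu$ at any efficiently representable point in the relative interior of the simplex (rather than the uniform distribution specifically) is compatible with the kernelization, so that the lower bound indeed holds under ``any parameterization'' as advertised in \Cref{theorem:informalmwu}. Finally, one should observe that the best-response computations implicit in evaluating $\reg_i^T$ can also be carried out in $\poly(|\calT|, T)$ time on extensive-form games, so the hypothesis of \Cref{cor:noregret-efgs} regarding polynomial per-iteration complexity is indeed met throughout the simulated run.
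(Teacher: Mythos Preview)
Your proposal is correct and follows essentially the same route as the paper: invoke the kernelized simulation of \citet{Farina22:Kernelized} to place $(\mathtt{O})\mwu$ within the scope of \Cref{cor:noregret-efgs}, then convert the resulting $|\calT|$-bound into an $m'$-bound via $\log_2\log_2 m' = \Theta(\log_2|\calT|)$. The caveats you flag about learning rates, bit truncation, initialization, and polynomial-time evaluation of $\reg_i^T$ are precisely those the paper addresses in the surrounding discussion.
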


In proof, the extensive-form game $\calT(\calG)$ described in \Cref{def:lifted-EFG} can be cast as a $3$-player $m'$-action normal-form game $\calG'$, where $m' = (2m)^{m^{\Theta(H)}}$. In this normal-form game $\calG'$, each player can implement $\mwu$ or $\omwu$ with per-iteration complexity polynomial in $|\calT|$~\citep{Farina22:Kernelized}, along with a representation of the iterates in $\poly(|\calT|, T)$ space. As a result, \Cref{cor:noregret-efgs} implies that, for any constant $C > 0$, at least $T \geq 2^{C \log_2^{1/2 - o(1)} |\calT| }$ repetitions are needed so that $\frac{1}{T} \max_{1 \leq i \leq n} \reg_i^T \leq \epsilon$ for each game $\calT$. The statement of \Cref{cor:MWU} thus follows from the fact that $\log_2  |\calT| \geq \Omega( \log_2 \log_2 m')$.

Let us point out another way to express the conclusion of \Cref{cor:MWU}. Suppose that the regret of each player, who has $m'$ available actions, can be bounded as $\reg_i^T \leq R(m') T^{1 - \gamma} $, for some constant $\gamma \in (0,1]$; this is the canonical form regret bounds assume. \Cref{cor:MWU} then implies that $R(m') \geq 2^{\gamma (\log_2 \log_2 m')^{1/2 - o(1)}}$.

We suspect that, at least under a specific parameterization, there should be a more elementary way of proving unconditional dimension-dependent lower bounds when multiple players follow algorithms such as $\mwu$. The main advantage of our approach is that it applies to any parameterization (potentially game-specific) and a broad class of algorithms; as concrete examples, our approach is robust to considering alternating instead of simultaneous dynamics, different players following different variants of $\mwu$, as well as using more general prediction mechanisms within the paradigm of optimistic $\mwu$~\citep{Syrgkanis15:Fast}. Beyond $\mwu$-type update rules, we suspect that \Cref{cor:MWU} applies more broadly to any member of follow the perturbed leader ($\ftpl$).

%\paragraph{Sublinear algorithms} In many realistic applications, even traversing the entire game tree is prohibitive~\citep{McAleer23:ESCHER,Steinberger20:DREAM}. Instead, one has to resort to \emph{sublinear} learning algorithms, whose iteration complexity satisfies is much smaller than the size of the game tree. It should come to no surprise that stronger lower bounds can be attained under such algorithms. Specifically, let us make here the stronger assumption that the iteration complexity is polynomial in the depth and the maximum number of actions over each information set.

\section{Conclusions and Open Problems}

In conclusion, we established the first dimension-dependent computational lower bounds for no-regret learning in extensive- and normal-form games, beyond the well-understood adversarial regime in online learning. A number of important questions remain open. Besides the obvious avenue of bridging the gaps between the current upper and lower bounds in extensive-form games, a fundamental question is to understand the complexity of computing sparse CCE (\Cref{def:sparseCCE}) in normal-form games. Indeed, even the complexity status of that problem under a mixture of two product distributions is open, although the fact that algorithms such as $\mwu$ and $\omwu$ require a superconstant number of iterations under any parameterization (\Cref{cor:MWU}) suggests that the problem is hard. Furthermore, our lower bounds apply to \emph{coarse} correlated equilibria; while those naturally translate to stronger equilibrium concepts as well, such as correlated equilibria, it would be interesting to understand whether stronger hardness results can be obtained for such refinements. In particular, in a surprising turn of events, it was recently shown that logarithmic sparsity is possible even for approximate correlated equilibria~\citep{Dagan23:From,Peng23:Fast}; are those guarantees for swap regret tight when learning in games? Interestingly, in a correlated equilibrium players observe additional information, which could potentially speed up the online density estimation procedure.

\section*{Acknowledgments}

We are grateful to the anonymous ITCS reviewers for their helpful feedback. This material is based on work supported by the Vannevar Bush Faculty
Fellowship ONR N00014-23-1-2876, National Science Foundation grants
RI-2312342 and RI-1901403, ARO award W911NF2210266, and NIH award
A240108S001.

\bibliography{main}

\newpage

\newpage

\end{document}